\documentclass[12pt]{article}
\usepackage[english]{babel}

\usepackage{amsmath,amsthm,amsfonts,amssymb,multicol,xcolor}
\usepackage{fullpage}
\usepackage{url}

\usepackage{verbatim} 
\usepackage{enumitem}
\usepackage{bbm} 
\usepackage[normalem]{ulem} 
\usepackage{bm}
\usepackage{mathtools}
\usepackage[pdftex]{hyperref}

\usepackage{dsfont}




\newcommand{\beq}{\begin{equation}}
\newcommand{\eeq}{\end{equation}}
\newcommand{\beqs}{\begin{equation*}}
\newcommand{\eeqs}{\end{equation*}}
\renewcommand{\l}{\left}
\renewcommand{\r}{\right}



\newcommand{\setof}[2]{\left\{ #1\; : \;#2 \right\}}




\newcommand{\ket}[1]{|#1\rangle}



\newcommand{\abs}[1]{\left\vert#1\right\vert}

\newcommand{\supp}{\,\text{supp}}





\newtheorem{thm}{Theorem}[section]
\newtheorem{lm}[thm]{Lemma}
\newtheorem{cor}[thm]{Corollary}
\newtheorem{prop}[thm]{Proposition}

\theoremstyle{definition}
\newtheorem{defn}[thm]{Definition}

\newtheorem{assum}[thm]{Assumption}

\theoremstyle{remark}
\newtheorem{rmk}[thm]{Remark}


\providecommand{\norm}[1]{\lVert#1\rVert}

\DeclareMathOperator{\gap}{gap}

\usepackage[doi=true, style=alphabetic, isbn=false, backend=bibtex]{biblatex}
\addbibresource{divide}
\renewbibmacro{in:}{} 

\usepackage{authblk}

\makeatletter
\newcommand\blankfootnote[1]{%
  \begingroup
    \renewcommand\thefootnote{}%
    \footnote{#1}%
    \addtocounter{footnote}{-1}%
  \endgroup
}
\makeatother

\begin{document}

\title{On the critical finite-size gap scaling for frustration-free Hamiltonians}
\author[1]{Marius Lemm \thanks{\tt marius.lemm@uni-tuebingen.de}}
\affil[1]{
Department of Mathematics, University of Tübingen, 72076 Tübingen, Germany
}
\author[2,3,4]{Angelo Lucia \thanks{\tt angelo.lucia@polimi.it}}
\affil[2]{
Dipartimento di Matematica, Politecnico di Milano, 20131 Milano, Italy
}
\affil[3]{
    Departamento de Análisis Matemático y Matemática Aplicada, 
    Universidad Complutense de Madrid, 28040 Madrid, Spain
}
\affil[3]{
    Instituto de Ciencias Matemáticas,
    28049 Madrid, Spain
}
\date{June 7, 2025}

\maketitle

\abstract{We prove that the critical finite-size gap scaling for frustration-free Hamiltonians is of inverse-square type. The result covers  general graphs embedded in $\mathbb R^D$ and  general finite-range interactions without requiring assumptions about the ground state correlations. Therefore, the inverse-square critical gap scaling is a robust, universal property of finite-range frustration-free Hamiltonians. This places further limits on their ability to produce conformal field theories in the continuum limit. Our proof refines the divide-and-conquer strategy of Kastoryano and the second author through the refined Detectability Lemma of Gosset--Huang. }\blankfootnote{
2010 Mathematics Subject Classification. Primary 82B10, 47A10, 81Q10}\blankfootnote{
Keywords: spectral gap, quantum spin systems, lattice models, finite-size criteria
}

\section{Introduction, setup and main result}
\subsection{Introduction}

A common strategy to prove lower bounds to the spectral gap for frustration-free quantum spin systems on a lattice is to obtain sufficiently good estimates on the spectral gap on finite volumes. Under certain assumptionss, these can in fact be lifted to uniform estimates for arbitrarily large volumes, which then implies a positive spectral gap in the thermodynamic limit. Both Knabe-type finite-size criteria (which require a single finite volume estimate)
\cite{gosset2016local,knabe1988energy,lemm2019spectral,lemm2020finite,lemm2022quantitatively} as well as the divide-and-conquer approach \cite{Kastoryano_2018} (which requires asymptotic estimates on the gap for large volumes) fit in this philosophy. 

The contrapositive of these results is that, in the case of gapless systems, the spectral gap of finite volumes (which is always a positive number) has to vanish sufficiently fast in the thermodynamic limit, as one could otherwise use the slowly vanishing finite volume bound to bootstrap a constant bound to the gap in the thermodynamic limit. One then can speak of a \emph{critical finite-size gap scaling} for frustration-free Hamiltonians: the slowest rate at which the spectral gap of any gapless frustration-free model vanishes. 

This idea was first observed by \cite{gosset2016local}, where a critical rate of $l^{-2}$ was proven for 1D spin chains with nearest neighbors interactions (where $l$ denotes the length of the chain), as well as for 2D nearest neighbor models on $l\times l$ square lattices (where $l$ is therefore now proportional to the diameter of the volume). The inverse square scaling $l^{-2}$ is well-known to be optimal in view of the Heisenberg ferromagnet, whose spectral gap vanishes as an inverse square on hypercubic lattices in any dimension, which can be easily proved rigorously by constructing explicit spin wave trial states.

This line of work was subsequently generalized in \cite{Anshu_2020,lemm2022quantitatively, masaoka2024rigorous}, where the inverse-square critical threshold was proven to hold for any interaction supported on the unit cells of hyper-cubic lattices \cite{Anshu_2020}, for nearest-neighbor interactions on the honeycomb and triangular lattices \cite{lemm2022quantitatively}, and for general graphs under the additional assumption that the ground state correlations decay as a power-law decaying \cite{masaoka2024rigorous}. Building up on a different proof technique, in \cite{Kastoryano_2018} it was shown a critical finite-size gap scaling of $l^{-1}$, up to logarithmic correction, which applies to general finite range interactions on any hypercubic lattice. Thus, while the threshold proven there is strictly not optimal, it indicates that one could hope for a general result that does not depend on the specific shape and geometry of the interactions. Also,  the technique in \cite{Anshu_2020} can be generalized to low degree lattices in $\mathbb R^D$ where edges only connect vertices at Euclidean distance $1$.

Our main contribution is to improve the threshold obtained in \cite{Kastoryano_2018} to $l^{-2}$, and to show that it applies to a large class of graphs and general interactions which includes all previous ones. Indeed, it covers all finite-range $k$-body interactions on bounded-degree graphs. While additional coarse-graining arguments could be applied to expand the scope of the prior works, they would not give a result of this generality. However,  Euclidean-type control on the graph geometry is still required in our proof and so the result does not apply, e.g., to expander graphs. The result comes with a mild logarithmic correction, which is unimportant for applications. 

Our result shows that the \textit{inverse-square gap closing speed is a universal property of finite-range frustration-free Hamiltonians}. In physics language, we prove that the dynamical critical exponent $z$ of finite-range frustration-free Hamiltonians satisfies $z\geq 2$ on general graphs and for general interactions. 
We achieve this generality by using the generality of the divide-and-conquer strategy and amplifying it to near-optimal threshold scaling $\sim l^{-2}$ up to logarithmic corrections via the refined Detectability Lemma due to Gosset-Huang \cite{gosset2016correlation}; see also \cite{anshu2016simple,Anshu_2020,anshu2020entanglement}. 

The approach is relatively flexible with respect to boundary conditions of $\Gamma$; see  Remark \ref{rmk:main}. In particular, it applies to open boundary conditions. Since it excludes gap closing rates that are of inverse length type $l^{-1}$, which would be characteristic of edge modes in conformal field theories \cite{francesco2012conformal,kitaev2006anyons,lemm2019spectral}, it places further limits on the ability of finite-range frustration-free Hamiltonians to produce conformal field theories in the continuum limit.

\subsection{Assumptions on the graph and the Hamiltonian}
We consider graphs $\Gamma$  that can be embedded in $\mathbb R^D$.

\begin{assum}[Assumption on the graph]\label{assum:bg}
    There exist a dimensional parameter $D\geq 1$, a constant $C_\Gamma\geq 1$, and an injective map $\iota:\Gamma\to\mathbb R^D$  such that
    \begin{equation}\label{eq:graph}
            C_\Gamma^{-1} |\iota(i)-\iota(j)|\leq d_\Gamma(i,j)\leq C_\Gamma |\iota(i)-\iota(j)|,\qquad \textnormal{for all }i,j\in\Gamma.
    \end{equation}
\end{assum}

Assumption \ref{assum:bg} is a rather weak assumption that allows many types of graph structures. The prime example satisfying Assumption \ref{assum:bg} is the hypercubic lattice $\mathbb Z^D$ and many standard lattices are also possible (e.g., honeycomb, triangular, fcc, bcc, hcp).

Assumption \ref{assum:bg} implies that the maximal degree of $\Gamma$ is bounded (see Lemma \ref{lm:basicPhi} for this fact and other consequences of Assumption \ref{assum:bg}).  
Examples where Assumption \ref{assum:bg} is \textit{not} satisfied include graphs of unbounded degree and the Bethe lattice (or other expander graphs). While $\Gamma$ is not required to have infinite cardinality, all relevant examples we have in mind have infinite cardinality.

Let $X\subset \Gamma$ be a finite subset. We place at each site of $X$ a $d$-dimensional local Hilbert space. Accordingly, the Hilbert space on the region $X$ is set to
\[
\mathcal H_X=\bigotimes_{i\in X} \mathbb C^d.
\] 
We write $\mathcal A_X$ for the associated local algebra of bounded linear operators with the natural inclusion $\mathcal A_X\subset \mathcal A_Y$ for $X\subset Y$ obtained by taking $A_X\otimes \mathrm{Id}_{Y\setminus X}\in \mathcal A_Y$.  To every finite $X$, we associate an interaction $\Phi(X)\in \mathcal A_X$, a Hermitian bounded linear operator that is subject to the following conditions.


\begin{assum}[Assumptions on the interaction]\label{assum:Phi}
~\begin{itemize}
\item[(i)] $\Phi$ has finite interaction range $R>0$, i.e., $\Phi(X)=0$ whenever $\mathrm{diam}\, (X)>R$. 

    \item[(ii)] $\Phi(X)\geq 0$ for all $X$ and it is uniformly bounded from above and away from zero, i.e.,
    \begin{align}
          \Phi_{\max}:=&\sup_{X\subset \Gamma}\|\Phi(X)\|<\infty,\\
          \Phi_{\min}:=&\inf_{\substack{X\subset \Gamma:\\ \Phi(X)\neq 0}}\gap\Phi(X)>0,
    \end{align}
    where we introduced $\gap\Phi(X):=\min(\mathrm{spec}\,\Phi(X)\setminus\{0\})$.
\end{itemize}
\end{assum}

Assumption (ii) is a uniformity assumption that allows us to reduce the Hamiltonian to a sum of local projectors. It holds in most of the practically relevant cases, e.g., in any translation-invariant situation. For finite-range interactions, Assumption (ii) is only violated in rather contrived situations, e.g., when the interactions behave very differently at infinity than in the bulk.





 Let $\Lambda\subset\Gamma$ be finite. On the Hilbert space $\mathcal H_\Lambda$, we consider the Hamiltonian
\begin{equation}\label{eq:HLambdadefn}
    H_\Lambda = 
\sum_{X\subset\Lambda} \Phi(X)
\end{equation}

Our main assumption is that the Hamiltonian is frustration-free. 

\begin{assum}[Assumption on the Hamiltonian]\label{assum:FF}
    For every finite $\Lambda\subset \Gamma$, $H_\Lambda$ is frustration-free, i.e., $\ker H_\Lambda \neq \{0\}$.
\end{assum}

The main quantity of interest is the  \emph{spectral gap} $\gap(\Lambda)$ of $H_\Lambda$, 
\begin{equation}
    \gap(H_{\Lambda})= \inf\left(\mathrm{spec}\, H(\Lambda)\setminus\{0\}\right)
\end{equation}
i.e., its smallest non-zero eigenvalue.

Well-known examples of frustration-free Hamiltonians that can be defined on general graphs $\Gamma$ are the Heisenberg ferromagnet,   AKLT Hamiltonians \cite{abdul2020class,affleck1988valence,lucia2023nonvanishing,wei2014hybrid}, and Hamiltonians with low-rank interactions \cite{sattath2016local,jauslin2022random,hunter-jones2025gapped}. We recall that almost all mathematical investigations of the size of spectral gaps require frustration-freeness --- without this assumption even absolutely fundamental questions like the Haldane conjecture \cite{haldane1983continuum,haldane1983nonlinear} (which asserts that any integer-spin Heisenberg antiferromagnetic chain is gapped) are completely open. 

Our contribution in this work is to show that \textit{ inverse-square critical gap scaling $l^{-2}$ is a robust, universal property of gapless frustration-free Hamiltonians}, in the sense that it holds under the rather weak Assumptions \ref{assum:bg} and \ref{assum:Phi} on the graph and interactions.



\subsection{Main result}

For simplicity, we identify from now on a graph $\Gamma$ satisfying Assumption \ref{assum:bg} with its Euclidean embedding $\iota(\Gamma)\subset \mathbb R^D$. This identification can be made without loss of generality, which can be seen by applying $\iota^{-1}$ to the Euclidean rectangles considered below.

 We use the following class of domains from \cite{Kastoryano_2018} based on a construction originally appearing in \cite{Cesi_2001}. For each $k\in \mathds{N}$, we let $l_k = (3/2)^{k/D}$ and introduce the  Euclidean rectangle 
\begin{equation}
        R(k) = [0, l_{k+1}] \times \cdots \times [0, l_{k+D}]\subset \mathbb R^D.
\end{equation}
We define $\mathcal{F}_k$ to be the collection of \textit{subsets of $\Gamma$ which are contained in $R(k)$} up to translations and permutations of the Euclidean coordinates.  We denote 
\begin{equation}
    \gap(\mathcal{F}_k) = \inf_{\Lambda\in \mathcal{F}_k}\gap(H_{\Lambda}), \quad \gap(\Gamma) = \inf_{k\geq 1} \gap(\mathcal{F}_k).
\end{equation}

\begin{thm}[Main result]\label{thm:main}
    Suppose that 
    \[
    \gap(\Gamma)=0.\]
   Then, for every $\epsilon>0$,
    \begin{equation}\label{eq:main}
        \gap(\mathcal{F}_k) = o\left( \frac{k^{4+\epsilon}}{l_k^2} \right), \qquad \textnormal{as } k\to\infty.
    \end{equation}
\end{thm}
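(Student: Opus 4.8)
The plan is to run a divide-and-conquer recursion on the nested family of rectangles $R(k)$, showing that if the gap decays too slowly then it cannot reach zero. The geometric point behind the sequence $l_k=(3/2)^{k/D}$ is that the volume ratio $|R(k+D)|/|R(k)| = \prod_{j=1}^D (l_{k+D+j}/l_{k+j})$ is essentially $3/2$ per dimension, so that a rectangle at scale $k+D$ can be covered by a bounded-overlap union of two translates of rectangles at scale $k$ — this is precisely the Cesi-type construction from \cite{Cesi_2001,Kastoryano_2018}. The first step is therefore to set up this covering: write $R(k+1)$ (after permuting so the longest side grows) as $A\cup B$ with $A,B$ translates of sets in $\mathcal F_k$ and with the overlap $A\cap B$ a slab of controlled width, which by Assumption \ref{assum:bg} contains a controlled number of sites.

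The second step is the spectral-gap recursion itself. For any $\Lambda=A\cup B$ one has the operator inequality relating $H_\Lambda$ to $H_A+H_B$; combined with a bound on $\|[\,\Pi_A,\Pi_B]\|$ for the ground-state projectors, one gets $\gap(H_\Lambda)\gtrsim (1-\delta_k)\min(\gap(H_A),\gap(H_B))$, where $\delta_k$ measures the "non-commutativity across the overlap." The refined Detectability Lemma of Gosset-Huang is what controls $\delta_k$: it gives that the relevant error decays \emph{exponentially} in the ratio (overlap width)/(interaction range $R$), rather than merely polynomially, provided the local gap on the overlap region is bounded below by $\gap(\mathcal F_{k'})$ for an appropriate smaller scale $k'\le k$. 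So the recursion schematically reads
\begin{equation*}
\gap(\mathcal F_{k+1}) \;\geq\; \bigl(1 - C\,e^{-c\,w_k} - (\text{lower-order})\bigr)\,\gap(\mathcal F_k),
\end{equation*}
where $w_k\sim l_k^{1/D}$ or so is the overlap width; the product $\prod_k(1-Ce^{-cw_k})$ converges to a positive constant, which would force $\gap(\Gamma)>0$ unless the base of the recursion already decays. Turning this around: assuming $\gap(\Gamma)=0$, the recursion must be fed a sequence that beats the exponential gain, and unwinding it over the $O(k)$ doublings needed to go from scale $1$ to scale $k$ yields the stated bound $\gap(\mathcal F_k)=o(k^{4+\epsilon}/l_k^2)$. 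The powers $k^{4}$ and the $\epsilon$-loss come from summing the Detectability-Lemma errors and the commutator bounds over the $\sim k$ recursion levels, each of which contributes a polynomial-in-$k$ factor; the $l_k^{-2}$ is the inverse-square signature, arising because the gap at scale $k$ is tied to the square of the inverse linear size through two applications of the "split" (one for each half), exactly as in the classical Cesi argument.

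The main obstacle I expect is making the Detectability-Lemma step quantitative on a \emph{general} graph and for \emph{general} finite-range interactions rather than on $\mathbb Z^D$ with nearest-neighbor terms. Concretely, one must (i) reduce the Hamiltonian to a sum of local projectors using Assumption \ref{assum:Phi}(ii) (dividing each $\Phi(X)$ by $\gap\Phi(X)$ and replacing it by the projector onto its range, at the cost of constants $\Phi_{\max},\Phi_{\min}$), (ii) verify that the "boundary" terms straddling $A\cap B$ form a region whose induced sub-Hamiltonian still has a gap controlled by some $\gap(\mathcal F_{k'})$ — this is where Assumption \ref{assum:bg} is used to bound degrees and the number of interaction terms meeting a slab — and (iii) track how the constants $C,c$ depend only on $D$, $C_\Gamma$, $R$, $d$, $\Phi_{\max}$, $\Phi_{\min}$, and not on $\Lambda$. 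Once the recursion is established with graph-independent constants, extracting the asymptotic rate is a routine telescoping computation; the delicate part is Step 2's bookkeeping, which is presumably where the refinement of \cite{Kastoryano_2018} via \cite{gosset2016local}'s detectability lemma does the real work.
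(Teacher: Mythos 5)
Your high-level architecture (project to local projectors, Cesi-type covering of $R(k+1)$ by elements of $\mathcal F_k$, a gap recursion whose error term $\delta_k$ is controlled by a detectability-lemma argument) matches the paper's. But your schematic recursion contains a gap that is fatal as written: you state that the overlap error decays ``exponentially in the ratio (overlap width)/(interaction range)'', giving $\gap(\mathcal F_{k+1})\ge(1-Ce^{-cw_k})\gap(\mathcal F_k)$ with $w_k$ growing in $k$. If that were true, the product $\prod_k(1-Ce^{-cw_k})$ would converge unconditionally and you would have proved $\gap(\Gamma)>0$ for \emph{every} frustration-free Hamiltonian, which is absurd. The crux of the paper (Lemma \ref{lm:deltakbound}) is that the exponent necessarily carries the unknown gap itself: $\delta_k\le C_1\exp(-C_2\sqrt{\lambda_k}\,l_k/s_k)$, where $\lambda_k$ is the assumed lower bound on $\gap(\mathcal F_k)$ and $l_k/s_k$ is the overlap width. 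The standard Detectability Lemma would give $\lambda_k$ (not $\sqrt{\lambda_k}$) in the exponent; the Gosset--Huang refinement, implemented via a Chebyshev polynomial smuggled into the coarse-grained DL operator (Lemma \ref{lm:poly}), upgrades this to $\sqrt{\lambda_k}$. Summability of $\delta_k$ then forces $\sqrt{\lambda_k}\,l_k/(k s_k)$ to be bounded below, i.e.\ $\lambda_k\gtrsim (k s_k)^2/l_k^2$; the inverse square in $l_k^{-2}$ comes from squaring $\sqrt{\lambda_k}$, not, as you suggest, from ``two applications of the split.'' With $s_k=k^{1+\epsilon/2}$ (needed so that $\sum 1/s_k<\infty$ in the factor $(1+1/s_k)^{-1}$ of Theorem \ref{thm:gap-recursion}), this yields exactly the threshold $k^{4+\epsilon}/l_k^2$, and the theorem follows by contraposition.

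A second, related omission: you describe a single decomposition $R(k+1)=A\cup B$ per level, whereas the argument requires $s_k$ \emph{distinct} decompositions $(A_i,B_i)_{i=1}^{s_k}$ with pairwise disjoint overlaps $A_i\cap B_i$. This is what produces the prefactor $(1+1/s_k)^{-1}$ instead of a fixed loss like $1/2$ per level (which would kill the infinite product), and it is also why the overlap width is $l_k/s_k$ rather than a fixed fraction of $l_k$ --- creating the tension between making $s_k$ large (for summability of $1/s_k$) and keeping the overlaps wide (for smallness of $\delta_k$) that determines the final exponents. Without these two quantitative mechanisms the ``routine telescoping'' you defer to cannot produce the stated rate.
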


That is, if the infinite-volume gap vanishes, then the finite-size gaps $\gap(\mathcal{F}_k)$ \textit{``cannot close too slowly''}: they must close at least as the inverse square power of the linear dimension, $l_k^{-2}$.

We emphasize that the numerator $k^{4+\epsilon}$ is a mild logarithmic correction, since the length parameter $l_k = (3/2)^{k/D}$ grows exponentially in $k$.\\

\begin{rmk}\label{rmk:main}
We comment on the special role played by periodic boundary conditions (b.c.). While our result is flexible with respect to the graph structure, the boundary conditions of $\Gamma$ are chosen globally in the beginning. This means that the subsystem Hamiltonians \eqref{eq:HLambdadefn} tend to have open b.c.    
    Our result therefore does not imply that periodic b.c.\ have inverse-square gap closing speed.     
    However, it is flexible enough to yield the following statement. Assume that the minimal gap taken over sets in $\mathcal F_k$ closes for all $2^d$ combinations of open b.c.\ and periodic b.c.\ that can be chosen for the rectangles $R(k)$. Then, this gap is in fact bounded from above by $o\left( \frac{k^{4+\epsilon}}{l_k^2} \right)$. In particular, the periodic gap is bounded by $o\left( \frac{k^{4+\epsilon}}{l_k^2} \right)$ and thus shows inverse-square scaling. Therefore, we can only conclude gap scaling of periodic b.c.\ if we also have information concerning gaps for open b.c.
Such a restriction is found throughout the literature, essentially because any iterative procedure that decomposes periodic b.c.\ will naturally encounter open b.c.\ as well. There is a simple workaround for this conundrum, if one is willing to assume that closing of the gap automatically implies the stronger statement that infinitely many eigenstates approach zero-energy (a statement that has indeed been proved for certain, very special 1D Hamiltonians), because then the result for periodic b.c.\ follows from the result for open b.c.\ from the min-max principle \cite{katsura2015exact,wouters2018exact,masaoka2024quadratic}. 
\end{rmk}

\section{Proof strategy}
\subsection{Reduction to sum of projectors}\label{ssect:projection}
In a preliminary step, we replace each $H_\Lambda$ by the sum of projectors
\begin{equation}\label{eq:projectionH}
   \tilde H_\Lambda=\sum_{X\subset \Lambda} h_X 
\end{equation}
where $h_X$ is the projection onto the range of $\Phi(X)$ (and so $h_X=0$ if $\Phi(X)=0$). The fact that $\Phi_{\min}h_X\leq \Phi(X)\leq \Phi_{\max} h_X$ implies that for any finite $\Lambda\subset \Gamma$,
\[
\Phi_{\min}\gap(\tilde H_{\Lambda})\leq 
\gap(H_\Lambda)\leq \Phi_{\max}\gap(\tilde H_{\Lambda}).
\]
Therefore, it suffices to prove Theorem \ref{thm:main} for $\tilde H_\Lambda$. Indeed, the assumption $\gap(H(\Lambda))=0$ implies $\gap(\tilde H_{\Lambda})=0$ and the conclusion that $\inf_{\Lambda\in \mathcal{F}_k}\gap(\tilde H_{\Lambda})=o\left( \frac{k^{4+\epsilon}}{l_k^2} \right)$ implies $\inf_{\Lambda\in \mathcal{F}_k}\gap(H_{\Lambda})=o\left( \frac{k^{4+\epsilon}}{l_k^2} \right)$.  In the following, we always consider the Hamiltonian \eqref{eq:projectionH}.

\subsection{Compatible sequences of domains}
The point of the choice of the collection of volumes $\mathcal{F}_k$ is that each element of $\mathcal{F}_k$ can be obtained as the union of two overlapping elements of $\mathcal{F}_{k-1}$. In fact, this decomposition is not unique, and it can be realized in many different ways. We will use a result from \cite{Kastoryano_2018}, with a minor improvement appearing in \cite{Lucia_2023}, where this fact is leveraged to bound $\gap(\mathcal{F}_k)$ in terms of $\gap(\mathcal{F}_{k-1})$.

As the result is more general, and only require certain properties of the sets $\mathcal{F}_k$, we introduce the following notation. 
Let $\mathcal{F}$ denote a family of finite subsets of $\Gamma$.
Denote
\begin{equation}
    \gap(\mathcal{F}) = \inf_{\Lambda \in \mathcal{F}} \gap (\Lambda),\qquad \gap(\Lambda)=\gap(\tilde H(\Lambda)).
\end{equation}
Given a finite $A\subset\Gamma$, we write $P_A$ for the projection onto the frustration-free ground state of $\tilde H_A$ (which is identical to the frustration-free ground state of the original Hamiltonian $ H_A$).

The following result shows how to get a relative bound on the spectral gap of two different families of finite subsets of $\Gamma$.
\begin{thm}[\protect{\cite[Theorem 2.3]{Lucia_2023}}]
\label{thm:gap-recursion}
    Let $\mathcal{F}$ and $\mathcal{F'}$ be two families of finite subsets of a graph $\Gamma$ satisfying Assumption \ref{assum:bg}. Suppose that there exists $s \in \mathds{N}$ and $ \delta \in [0,1]$ satisfying the following property: for each $Y \in \mathcal{F}'\setminus \mathcal{F}$, there exists $(A_i,B_i)_{i=1}^s$ pairs of elements in $\mathcal{F}$ such that:
    \begin{enumerate}
        \item $Y = A_i \cup B_i$ for each $i = 1, \dots , s$;
        \item $(A_i \cap B_i) \cap (A_j \cap B_j ) = \emptyset$ whenever $i\neq j$;
        \item $\| P_{A_i} P_{B_i} - P_Y \| \le \delta $ for every $i = 1, \dots , s$.
    \end{enumerate}
    Then 
    \begin{equation}
        \gap(\mathcal{F'}) \ge \frac{1-\delta}{1+\frac{1}{s}} \gap(\mathcal{F}).
    \end{equation}
\end{thm}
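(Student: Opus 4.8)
The plan is to prove the asserted inequality $Y$-by-$Y$: for each $Y\in\mathcal{F}'$ we show $\gap(Y)\ge\tfrac{1-\delta}{1+1/s}\gap(\mathcal{F})$, and then take the infimum over $Y$. If $Y\in\mathcal{F}$ this is immediate since $\gap(Y)\ge\gap(\mathcal{F})$ and the prefactor is at most $1$; so I fix $Y\in\mathcal{F}'\setminus\mathcal{F}$ together with the pairs $(A_i,B_i)_{i=1}^s$ supplied by the hypothesis, and I may assume $\delta<1$ (the case $\delta=1$ being vacuous). Writing $G_Z:=1-P_Z$, the goal is the operator inequality
\begin{equation}\label{eq:goalineq}
\tilde H_Y \;\ge\; \frac{1-\delta}{1+\tfrac1s}\,\gap(\mathcal{F})\,G_Y ,
\end{equation}
which yields the claim at once: since $\mathrm{range}\,P_Y=\ker\tilde H_Y$, testing \eqref{eq:goalineq} on the spectral subspace $\mathrm{range}\,G_Y=(\ker\tilde H_Y)^\perp$ shows every nonzero eigenvalue of $\tilde H_Y$ is bounded below by the displayed constant.

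The first ingredient is a ``packing'' bound coming from operator monotonicity together with disjointness hypothesis~(2):
\begin{equation}\label{eq:packing}
\sum_{i=1}^s\big(\tilde H_{A_i}+\tilde H_{B_i}\big)\;\le\;(s+1)\,\tilde H_Y .
\end{equation}
I would prove this from two elementary observations about interaction supports, using only $h_X\ge 0$. By inclusion--exclusion, $\tilde H_{A_i}+\tilde H_{B_i}=\tilde H_{A_i\cap B_i}+\sum_{X:\,X\subset A_i\text{ or }X\subset B_i}h_X\le\tilde H_{A_i\cap B_i}+\tilde H_Y$, because every $X$ in the last sum satisfies $X\subset A_i\cup B_i=Y$. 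And $\sum_{i=1}^s\tilde H_{A_i\cap B_i}\le\tilde H_Y$: by hypothesis~(2) the sets $A_i\cap B_i$ are pairwise disjoint, so a nonempty $X$ lies in at most one of them, and hence the $h_X$ occurring in the left-hand sum are distinct terms of $\tilde H_Y$. Summing the first estimate over $i$ and inserting the second gives \eqref{eq:packing}.

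The second ingredient is the single-pair lower bound
\begin{equation}\label{eq:single}
\tilde H_{A_i}+\tilde H_{B_i}\;\ge\;(1-\delta)\,\gap(\mathcal{F})\,G_Y ,\qquad i=1,\dots,s .
\end{equation}
Since $A_i,B_i\in\mathcal{F}$, the spectral theorem gives $\tilde H_{A_i}\ge\gap(A_i)\,G_{A_i}\ge\gap(\mathcal{F})\,G_{A_i}$ and likewise for $B_i$, so it suffices to show $G_{A_i}+G_{B_i}\ge(1-\delta)G_Y$. Here I would invoke the two-projection (Jordan) lemma: for projections $P,Q$ with $P\wedge Q$ the projection onto $\mathrm{range}\,P\cap\mathrm{range}\,Q$, one has $(1-P)+(1-Q)\ge\big(1-\|PQ-P\wedge Q\|\big)(1-P\wedge Q)$, checked block-by-block in the simultaneous $1\times1$/$2\times2$ decomposition of $P,Q$. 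Applied with $P=P_{A_i}$, $Q=P_{B_i}$, this requires identifying $P_{A_i}\wedge P_{B_i}=P_Y$: the inclusion $\mathrm{range}\,P_Y\subset\mathrm{range}\,P_{A_i}\cap\mathrm{range}\,P_{B_i}$ holds because a ground state of $\tilde H_Y$ is annihilated by every $h_X$ with $X\subset Y$, in particular by those with $X\subset A_i$ and with $X\subset B_i$; the reverse inclusion is forced by hypothesis~(3) with $\delta<1$, since a unit vector $v\in\mathrm{range}\,P_{A_i}\cap\mathrm{range}\,P_{B_i}$ orthogonal to $\mathrm{range}\,P_Y$ would give $\|(P_{A_i}P_{B_i}-P_Y)v\|=1>\delta$. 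Combining the Jordan lemma with hypothesis~(3) yields $G_{A_i}+G_{B_i}\ge(1-\|P_{A_i}P_{B_i}-P_Y\|)G_Y\ge(1-\delta)G_Y$, which is \eqref{eq:single}.

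Finally, summing \eqref{eq:single} over $i$ and combining with \eqref{eq:packing} gives $(s+1)\tilde H_Y\ge\sum_i(\tilde H_{A_i}+\tilde H_{B_i})\ge s(1-\delta)\gap(\mathcal{F})\,G_Y$, which is exactly \eqref{eq:goalineq}; taking the infimum over $Y\in\mathcal{F}'$ completes the proof. I expect the only genuinely substantive step to be the two-projection operator inequality together with the identification $P_{A_i}\wedge P_{B_i}=P_Y$ — this is precisely where hypothesis~(3) and the restriction $\delta<1$ are used. The packing bound \eqref{eq:packing} and the reduction to \eqref{eq:goalineq} are routine bookkeeping with the positivity of the projectors $h_X$ and the combinatorics of which interaction terms are supported in which region.
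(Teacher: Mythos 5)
Your proof is correct. Note that the paper itself does not prove this statement --- it is imported verbatim from \cite{Lucia_2023} (building on \cite{Kastoryano_2018}) --- so the comparison is with the proof in that reference, and your argument is essentially the same divide-and-conquer scheme: the packing bound $\sum_i(\tilde H_{A_i}+\tilde H_{B_i})\le(s+1)\tilde H_Y$ from the disjointness of the overlaps $A_i\cap B_i$, the spectral bound $\tilde H_{A_i}\ge\gap(\mathcal F)(\mathds 1-P_{A_i})$, and a two-projection inequality converting hypothesis (3) into $(\mathds 1-P_{A_i})+(\mathds 1-P_{B_i})\ge(1-\delta)(\mathds 1-P_Y)$. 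The one genuine variation is in that last step: the published proof applies the elementary estimate $\|P+Q\|\le 1+\|PQ\|$ to the projections $P_{A_i}-P_Y$ and $P_{B_i}-P_Y$ (legitimate because frustration-freeness gives $P_Y\le P_{A_i},P_{B_i}$), which works directly with $P_Y$ and never needs to know that $P_Y$ equals the meet $P_{A_i}\wedge P_{B_i}$; you instead invoke Halmos's two-subspace decomposition, which forces you to prove the identification $P_{A_i}\wedge P_{B_i}=P_Y$ as an extra step --- and you do carry this out correctly, observing that a unit vector in $\operatorname{range}P_{A_i}\cap\operatorname{range}P_{B_i}$ orthogonal to $\operatorname{range}P_Y$ would make $\|P_{A_i}P_{B_i}-P_Y\|=1>\delta$. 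Both routes are sound; the block-diagonal route is slightly heavier but makes the geometric meaning of $\delta$ (the cosine of the angle between the two ground spaces modulo their intersection) more transparent. One pedantic point you glossed over but which is harmless: $h_\emptyset=0$ by frustration-freeness, so the empty set causes no double counting in the packing bound.
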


In \cite{Kastoryano_2018}, it was verified that the families $\mathcal{F}_k$ and $\mathcal{F}_{k-1}$ satisfy the assumptions of Theorem~\ref{thm:gap-recursion} for every $k$, with the appropriate choice of $s$ and $\delta$. In fact, while it was not explicitly stated, from the construction it can be seen that the decomposition has one additional property: the decomposition always happens along one coordinate direction, or in other words the sets $A_i$ and $B_i$ coincide when we project out one given direction (which depends on the choice of $Y$ but not of $i$). This will turn out to be important for our proof later, in order to connect to the refined Detectability Lemma, which requires us to make a 1D reduction to choose a coarse-graining direction. 

Let $\alpha\in \{1,\ldots,D\}$. We write $\Pi_\alpha$ for the coordinate projection
\[\Pi_\alpha(x_1,\ldots,x_D)=(x_1,\ldots,x_{\alpha-1},0,x_{\alpha+1},\ldots,x_D).
\]

\begin{prop}[\protect{\cite[Proposition 1]{Kastoryano_2018}}]\label{prop:fk-decomposition}
 Let $s_k \le \frac{1}{8}l_k$.  For each $k\geq 1$, the conditions 1 and 2 of Theorem~\ref{thm:gap-recursion} are satisfied for $\mathcal F'=\mathcal F_{k+1}$, $\mathcal F=\mathcal F_{k}$ and $s=s_k$. Moreover, for each $Y \in \mathcal{F}_k \setminus \mathcal{F}_{k-1}$, it is possible to choose the pairs $(A_i,B_i)_{i=1}^{s_k}$ in such a way that
\begin{equation}\label{eq:AiBidist}
        d_\Gamma(A_i\setminus B_i, B_i\setminus A_i) \ge C_\Gamma^{-1}\left(\frac{l_k}{8 s_k}\right).
    \end{equation}
    Moreover, there exists $\alpha\in \{1,\ldots,D\}$ such that    \begin{equation}\label{eq:projected}
  \Pi_\alpha A_i=\Pi_\alpha B_i.
    \end{equation}
\end{prop}
We include a brief summary of the proof.

\begin{proof}[Proof of Proposition \ref{prop:fk-decomposition}]
    Let $k\geq 2$ and consider a $Y \in \mathcal{F}_{k}\setminus \mathcal F_{k-1}$. Then up to a translation and a permutation of the coordinates, $Y$ coincides with $Y'\cap \Gamma$ for some $Y' \subset R(k)$. We will show how to decompose the Euclidean subset $Y'$, and the corresponding decomposition of $Y$ will follow by intersecting with $\Gamma$ and inverting the translation and permutation of the coordinates.

    For $i=1, \dots, s_k$, let
    \begin{align*}
        L_i &= [0,l_{k+1}] \times \cdots \times [0, l_{k+D-1}] \times \left[0, \frac{1}{2}l_{k+D} + 2 i \frac{l_k}{8s_k}\right]; \\
        R_i &= [0,l_{k+1}] \times \cdots \times [0, l_{k+D-1}] \times \left[\frac{1}{2}l_{k+D} + (2 i -1)\frac{l_k}{8s_k}, l_{k+D}\right].
    \end{align*}
    One can see that $L_i \cup R_i = R(k)$ for each $i$, that $\Pi_D L_i = \Pi_D R_i$, and that $(L_i \cap R_i) \cap (L_j \cap R_j) = \emptyset$ for $i\neq j$. Moreover, $L_i, R_i$ are contained in $R(k-1)$ up to a cyclic shift of the coordinates. Finally, with respect to the Euclidean distance, 
    \[ d(L_i \setminus R_i, R_i\setminus L_i) \ge \frac{l_k}{8 s_k} .\]

    We now define $A_i = Y' \cap L_i\cap \Gamma$ and $B_i = Y' \cap R_i\cap \Gamma$. From the properties of $L_i$, $R_i$, we see that $A_i$ and $B_i$ belong to $\mathcal{F}_{k-1}$, and that they must be non-empty (otherwise $Y$ would have belonged to $\mathcal{F}_{k-1}$). It then follows that they satisfy also conditions 1 and 2 of Theorem~\ref{thm:gap-recursion}, and that $\Pi_D A_i = \Pi_D B_i$. From \eqref{eq:graph}, it follows that
    \[
    \begin{aligned}
        d_\Gamma(A_i \setminus B_i, B_i \setminus A_i)
        &= d_\Gamma( Y' \cap \Gamma\cap (L_i\setminus R_i), Y' \cap \Gamma\cap (R_i\setminus L_i) )\\
        &\ge C_\Gamma^{-1} d(L_i \setminus R_i, R_i\setminus L_i).
    \end{aligned}\]
\end{proof}

Iterating Theorem~\ref{thm:gap-recursion} along the sequence $\{\mathcal{F}_k\}_{k\geq 1}$ yields the following gap estimate.

\begin{cor}\label{cor:iterated-gap-estimate}
Set
\begin{equation}\label{eq:deltakdefn}
    \delta_k := \sup_{(A,B) \in S_k } \| P_{A} P_{B} - P_{A \cup B} \|, 
\end{equation}
where
    \begin{equation}\label{eq:Skdefn}
        S_k := \left\{ (A,B) \in \mathcal{F}_{k-1}\times \mathcal{F}_{k-1}\, :\, A \cup B \in \mathcal{F}_k,\,  \eqref{eq:AiBidist} \textnormal{ and } 
        \eqref{eq:projected} \textnormal{ hold }\right\}.
    \end{equation}
Suppose there exists a positive $k_0$ such that $\delta_k < 1$ for every $k \ge k_0$.
    Then,
    \begin{equation}\label{eq:gap-estimate}
        \gap(\Gamma) \ge \Phi_{\min}\gap(\mathcal{F}_{k_0}) \prod_{k = k_0}^\infty \frac{1-\delta_k}{1+\frac{1}{s_k}}.
    \end{equation}
\end{cor}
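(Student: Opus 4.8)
The plan is to iterate the gap recursion from Theorem~\ref{thm:gap-recursion} along the increasing sequence $\{\mathcal F_k\}_{k\ge k_0}$. First, fix $s_k = \lfloor l_k/8\rfloor$ (or any admissible choice with $s_k\le l_k/8$ that we will later want to grow with $k$); by the Proposition, for every $k\ge 1$ the families $\mathcal F' = \mathcal F_{k+1}$, $\mathcal F=\mathcal F_k$ satisfy conditions 1 and 2 of Theorem~\ref{thm:gap-recursion} with $s=s_k$, and moreover each decomposing pair can be taken to lie in $S_k$ (it satisfies \eqref{eq:AiBidist} and \eqref{eq:projected}). Hence condition 3 holds with $\delta = \delta_{k+1}$ as defined in \eqref{eq:deltakdefn}, since $\delta_{k+1}$ is by construction the supremum of $\|P_AP_B - P_{A\cup B}\|$ over exactly the pairs $(A,B)\in S_{k+1}$ that arise. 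Therefore Theorem~\ref{thm:gap-recursion} gives, for every $k\ge k_0$,
\[
\gap(\mathcal F_{k+1}) \ge \frac{1-\delta_{k+1}}{1+\frac{1}{s_k}}\,\gap(\mathcal F_k).
\]

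Next I would simply compose these inequalities from $k=k_0$ up to an arbitrary $N$. Telescoping yields
\[
\gap(\mathcal F_{N}) \ge \gap(\mathcal F_{k_0}) \prod_{k=k_0}^{N-1} \frac{1-\delta_{k+1}}{1+\frac{1}{s_k}} \ge \gap(\mathcal F_{k_0}) \prod_{k=k_0}^{\infty} \frac{1-\delta_{k}}{1+\frac{1}{s_k}},
\]
where in the last step I use that each factor $\frac{1-\delta_{k+1}}{1+1/s_k}$ lies in $[0,1]$ (this needs $\delta_k<1$, which holds for $k\ge k_0$ by the definition of $k_0$, and $s_k\ge1$), so extending the finite product to an infinite one only decreases the right-hand side; a minor reindexing of the $\delta$'s versus the $s$'s absorbs harmlessly since we are only bounding from below by an infinite product of terms in $[0,1]$. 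Since $\gap(\Gamma) = \inf_{k\ge1}\gap(\mathcal F_k)$ and the bound above is uniform in $N$ (and the finitely many $k< k_0$ as well as $k_0$ itself are bounded below by $\gap(\mathcal F_{k_0})$ times the tail product, which is $\le 1$), we obtain $\gap(\Gamma)\ge \gap(\mathcal F_{k_0})\prod_{k=k_0}^\infty \frac{1-\delta_k}{1+1/s_k}$. The factor $\Phi_{\min}$ then enters because throughout Section~\ref{ssect:projection} we work with $\tilde H_\Lambda$ rather than $H_\Lambda$, and $\gap(H_\Lambda)\ge \Phi_{\min}\gap(\tilde H_\Lambda)$; applying this at the level of the outer infimum $\gap(\Gamma)$ restores the stated constant.

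There is essentially no obstacle here: this corollary is the bookkeeping step that packages the single-step recursion into a product formula, and the only points requiring care are (a) checking that the pairs furnished by the Proposition genuinely lie in the set $S_k$ over which $\delta_k$ is defined — so that condition 3 of Theorem~\ref{thm:gap-recursion} is met with the \emph{same} $\delta_k$ that appears in the product — and (b) verifying that the infinite product is meaningful, i.e.\ that all but finitely many factors are positive, which is exactly the role of $k_0$ and the hypothesis $\delta_k<1$ for $k\ge k_0$ (such a $k_0$ will be shown to exist later via the Detectability Lemma bound on $\delta_k$). The genuine work of the paper — estimating $\delta_k$ and turning the product into the quantitative $o(k^{4+\epsilon}/l_k^2)$ bound of Theorem~\ref{thm:main} — is deferred to the subsequent sections; here one only records the structural inequality \eqref{eq:gap-estimate}.
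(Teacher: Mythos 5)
Your proposal is correct and is exactly the argument the paper intends (the paper itself only says ``Iterating Theorem~\ref{thm:gap-recursion} along the sequence $\{\mathcal F_k\}$ yields the estimate''): apply the recursion step by step with $\delta=\delta_k$, which is legitimate because the pairs furnished by the Proposition lie in $S_k$, telescope, use that all factors lie in $[0,1]$ and that $\gap(\mathcal F_k)$ is non-increasing in $k$ (since $\mathcal F_k\subset\mathcal F_{k+1}$) to pass to the infimum defining $\gap(\Gamma)$, and reinstate $\Phi_{\min}$ to return from $\tilde H$ to $H$. No gaps.
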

Note that the constant on the r.h.s.\ of \eqref{eq:gap-estimate} is strictly positive if and only if $(\delta_k)_k$ and $\left(\frac{1}{s_k}\right)_k$ are summable sequences (a stronger condition than the fact that $\delta_k$ is eventually smaller than 1). In this case, the ``infinite-volume'' gap  $\inf_{n\geq 1} \gap(\mathcal{F}_{n})$ is lower-bounded by a constant and one commonly says that the infinite system is gapped. 

The proof of Theorem \ref{thm:main} relies on the following bound on $\delta_k$ in terms of s lower bound on the spectral gap $\gap(\mathcal{F}_k)$.
\begin{lm}[Refined overlap bound]\label{lm:deltakbound}
    There exist $C_1,C_2>0$ such that the following holds. Let $k\geq 1$ and $(A,B)\in S_k$. Suppose that $\lambda_k\in (0,1)$ satisfies $\gap(\mathcal F_k)\geq \lambda_k>0$. 

    Then, 
    \begin{equation}\label{eq:deltakbound}
             \delta_k \le C_1 \exp\left(-C_2 \sqrt{\lambda_k} \frac{l_k}{s_k}\right).
\end{equation}
\end{lm}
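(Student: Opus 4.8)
The plan is to estimate the overlap norm $\|P_A P_B - P_{A\cup B}\|$ by relating it to how well the ground-state projector $P_{A\cup B}$ can be ``detected'' by applying the local projectors on the overlap region $A\cap B$. The key mechanism is the refined Detectability Lemma of Gosset--Huang: if the Hamiltonian restricted to $A\cap B$ (which contains the ``seam'' between $A$ and $B$) has gap at least $\lambda_k$, then alternately applying the layers of projectors drives any state orthogonal to the global ground space toward zero at a rate governed by $\sqrt{\lambda_k}$ times the number of layers. Since property \eqref{eq:projected} gives us a distinguished coarse-graining direction $\alpha$, and \eqref{eq:AiBidist} guarantees that $A\setminus B$ and $B\setminus A$ are separated by distance $\gtrsim l_k/s_k$, we can stratify the relevant operators into $\sim l_k/s_k$ commuting layers along direction $\alpha$. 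Feeding this layer count into the refined Detectability Lemma yields the stretched-exponential decay $\exp(-C_2\sqrt{\lambda_k}\, l_k/s_k)$.

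Concretely, first I would recall the standard algebraic identity bounding $\|P_A P_B - P_{A\cup B}\|$ in terms of $\|P_A P_B P_{A\cup B}^\perp\|$ (or its symmetrized version), reducing the task to showing that the product $P_A P_B$, restricted to the orthogonal complement of the global ground space, has small norm. Next I would set up the ``detectability'' operator: since $P_A$ is the ground-state projector of $\tilde H_A = \sum_{X\subset A} h_X$ and likewise for $P_B$, the product $P_A P_B$ is an average/product over all the $h_X$'s, and the only terms that matter for distinguishing $P_{A\cup B}$ are those supported near the seam $A\cap B$. Using Assumption~\ref{assum:bg} and the finite range $R$ of $\Phi$, I would organize the projectors $h_X$ with $X$ intersecting the seam into geometric layers indexed by the $\alpha$-coordinate, with spacing $> R$ so that projectors in the same layer commute and each layer is a single coarse-grained projector. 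The number of such layers is $N_k \asymp C_\Gamma^{-1}(l_k/(8 s_k) - 2)$ by \eqref{eq:AiBidist}. Then the refined Detectability Lemma (in the form of, e.g., \cite{gosset2016correlation,Anshu_2020}) gives $\|P_A P_B - P_{A\cup B}\| \le C_1 (1 + c\,\lambda_k)^{-\Omega(N_k)}$ or more precisely $\le C_1 \exp(-C_2 \sqrt{\lambda_k} N_k)$, which after absorbing the additive $-2$ and the constant $C_\Gamma$ into $C_1, C_2$ yields \eqref{eq:deltakbound}.

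The main obstacle I anticipate is the geometric/combinatorial bookkeeping needed to legitimately apply the refined Detectability Lemma in this generality: the lemma is usually stated for a 1D chain of (coarse-grained) projectors with a bounded interaction structure, so I must verify that the layering construction along direction $\alpha$ — using \eqref{eq:projected} to ensure $A$ and $B$ agree after projecting out the $\alpha$-coordinate — produces a genuine one-dimensional sequence of commuting-within-layer projectors whose product reproduces $P_A P_B$ on the relevant subspace, and that the local gap of each coarse-grained block is still controlled by $\lambda_k$ (this uses that $\gap(\mathcal F_k) \ge \lambda_k$ and that the coarse-grained blocks still lie in $\mathcal F_k$ up to translation/permutation, possibly after a further coarse-graining step absorbed into the constants). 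Handling the boundary effects where the seam meets the boundary of $A\cup B$, and making sure the number of layers is not degraded below $\Omega(l_k/s_k)$, is the delicate part; everything else is a routine combination of the operator-norm identity and the cited detectability estimate.

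Once Lemma~\ref{lm:deltakbound} is established, Proposition~\ref{prop:threshold-gap} follows immediately: under the hypothesis $\liminf_k \sqrt{\lambda_k}\, l_k/(k s_k) > 0$ we get $\sqrt{\lambda_k}\, l_k/s_k \gtrsim k$ for large $k$, so $\delta_k \le C_1 e^{-C_2' k}$ is summable, and since $(1/s_k)_k$ is summable by assumption, the infinite product in \eqref{eq:gap-estimate} of Corollary~\ref{cor:iterated-gap-estimate} converges to a strictly positive constant, giving $\gap(\Gamma) > 0$. I would state this short deduction right after the proof of the lemma.
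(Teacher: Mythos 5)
Your high-level strategy coincides with the paper's: reduce to the projector Hamiltonian, use \eqref{eq:projected} to fix a coarse-graining direction $\alpha$, build a one-dimensional layered (``even/odd column'') detectability-lemma operator of width $t\sim l_k/s_k$ using \eqref{eq:AiBidist}, and invoke the Gosset--Huang refinement (Chebyshev acceleration) to get the exponent $\sqrt{\lambda_k}\,t$ rather than $\lambda_k t$. Two clarifications on that part: the relevant gap hypothesis is applied to $H_{A\cup B}$ itself (which lies in $\mathcal F_k$, so $\gap \ge \lambda_k$), not to a Hamiltonian ``restricted to $A\cap B$''; and the $\sqrt{\lambda_k}$ does not come from counting layers per se but from smuggling a Chebyshev step polynomial $F(\mathds 1 - T^\dag T)$ of degree $\sim t/R$ between the even and odd halves of the coarse-grained operator (Lemma \ref{lm:poly}), whose degree is limited by a light-cone argument.

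The genuine gap is in the step that connects $\|P_AP_B-P_{A\cup B}\|$ to the detectability operator. Your reduction to $\|P_AP_BP_{A\cup B}^\perp\|$ is a valid identity, but your proposed mechanism for bounding it --- treating ``the product $P_AP_B$'' as ``an average/product over all the $h_X$'s'' that can be stratified into commuting layers --- does not work: $P_A$ and $P_B$ are spectral (ground-state) projectors of $\tilde H_A$ and $\tilde H_B$, not finite products of the local projectors $1-h_X$, so $P_AP_B$ is not itself a detectability-lemma operator and cannot be ``organized into layers.'' The missing idea goes in the opposite direction: one builds the coarse-grained operator $DL(t)$ on all of $\Lambda=A\cup B$ and then \emph{splits it} as $DL(t)=M_AM_B$, where $M_A$ collects the column projectors $Q_m$ meeting $A\setminus B$ (together with the odd ones overlapping them) and $M_B$ the rest; the separation $d_\Gamma(A\setminus B, B\setminus A)\gtrsim l_k/s_k$ guarantees that for $t$ a suitable constant multiple of $l_k/s_k$ one has $\supp M_A\subset A$ and $\supp M_B\subset B$, whence frustration-freeness gives the absorption identities $P_AM_A=P_A$ and $P_BM_B=P_B$. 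A triangle inequality then yields $\|P_AP_B-P_{A\cup B}\|\le 3\|DL(t)P_{A\cup B}^\perp\|$, to which Lemma \ref{lm:AAG} applies. Without this splitting-and-absorption step your argument does not close; with it, the rest of your outline (including the deduction of Proposition \ref{prop:threshold-gap}) is correct.
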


This is the key improvement over Theorem 11 in \cite{Kastoryano_2018}, in which the exponent $\sqrt{\lambda_k}$ behaved as $\lambda_k$ instead.
We postpone the proof of Lemma \ref{lm:deltakbound} to Section~\ref{sec:overlap-bound} for the moment, and we first show how Theorem~\ref{thm:main} follows from it.

\subsection{Proof of Theorem \ref{thm:main} assuming Lemma \ref{lm:deltakbound}}

As a first step, we combine Lemma~\ref{lm:deltakbound} with Corollary~\ref{cor:iterated-gap-estimate} to obtain the following Proposition, showing how a lower bound to $\gap(\mathcal{F}_k)$ decaying sufficiently slowly (see \eqref{eq:decay-condition}) implies a positive gap in the thermodynamic limit.

\begin{prop}\label{prop:threshold-gap}
    Suppose there exists a sequence $\lambda_k \in (0,1) $ such that
    \begin{equation}\label{eq:lambdaklb}
         \gap(\mathcal{F}_k) \ge \lambda_k > 0, \quad \forall k;
    \end{equation}
    and a sequence $s_k > 0$ such that $(\frac{1}{s_k})_k$ is summable and
    \begin{equation}\label{eq:decay-condition}
        \liminf_{k\to\infty} \sqrt{\lambda_k}  \frac{l_k}{k s_k} > 0.
    \end{equation}
    Then $\gap(\Gamma) > 0$.
\end{prop}
\begin{proof}
    By Lemma \ref{lm:deltakbound} and the definition of $\delta_k$, we have that
    \begin{equation}\label{eq:deltakestimate}
           \delta_k \le C_1 \exp\left(-C_2 \sqrt{\lambda_k} \frac{l_k}{s_k}\right),
    \end{equation}
    for positive constants $C_1$ and $C_2$.
    In order to apply Corollary~\ref{cor:iterated-gap-estimate}, and to show that the r.h.s.\ of \eqref{eq:gap-estimate} is non-zero, we need to show that $(\delta_k)_k$ is a summable sequence. We will do so by checking that the r.h.s.\ of \eqref{eq:deltakestimate} satisfies the root test:
    \[
    \limsup_{k\to\infty}\ C_1^{1/k} \exp\left( - C_2 \sqrt{\lambda_k} \frac{l_k}{s_k}\right)^{1/k} =
    \exp\left(-C_2  \liminf_{k\to\infty} \sqrt{\lambda_k}  \frac{l_k}{k s_k}\right)
    \]
    which is strictly smaller than $1$ by assumption.
\end{proof}

We are now ready to prove Theorem \ref{thm:main}. It will follow by contradiction from Proposition~\ref{prop:threshold-gap}: if the system is gapless in the thermodynamic limit, then the assumptions of Proposition~\ref{prop:threshold-gap} cannot hold true, and therefore the finite-size volume gap has to vanish sufficiently fast.

\begin{proof}[Proof of Theorem \ref{thm:main}]
    As explained in Subsection \ref{ssect:projection}, it suffices to prove the claim for the Hamiltonians 
    \eqref{eq:projectionH}. We reason by contradiction. Suppose that
    \[
    \gap(\mathcal{F}_k) \ge c \frac{k^{4+\epsilon}}{l_k^2}, \quad \forall k,
    \]
    for some positive constant $c$ and some $\epsilon >0$. Set $ \lambda_k=c \frac{k^{4+\epsilon}}{l_k^2}$, which is $<1$ for sufficiently large $k$, and choose $s_k = k^{1+\frac{\epsilon}{2}}$, making $(\frac{1}{s_k})_k$ summable. Moreover
    \[
    \liminf_{k\to\infty} \sqrt{\lambda_k}  \frac{l_k}{k s_k} = \sqrt{c} \liminf_{k\to\infty} \frac {k^{2+\frac{\epsilon}{2}}}{l_k} \frac{l_k}{k s_k} = \sqrt{c}>0.
    \]
    By Proposition~\ref{prop:threshold-gap}, this implies that $\gap(\Gamma) > 0$, leading to a contradiction.
\end{proof}


\section{Refined overlap bound}
\label{sec:overlap-bound}

In this section, we prove Lemma \ref{lm:deltakbound}. The proof will follow from a suitable generalization of the refined Detectability Lemma (Lemma \ref{lm:AAG} below), whose setup we describe next. In contrast to \cite{Kastoryano_2018} and rather similarly to \cite{Anshu_2020}, we first perform a 1D coarse-graining which is a key ingredient in the refined Detectability Lemma. The  differences compared to \cite{Anshu_2020} are that we work in a more general setup (general number of layers and interaction range) and we need to track how Euclidean versus graph geometry enters in the arguments, as both are relevant.

\subsection{Coarse-graining}\label{ssect:prelim}
Similarly to \cite{Anshu_2020}, as a first step we perform a coarse-graining of the interactions. Once a coordinate direction is fixed (which will be determined by the decomposition from Proposition~\ref{prop:fk-decomposition}), we will group the interactions into ``columns'' along the specified direction, in order to reduce the problem to a quasi-1D model.

Let $k\geq 1$ and consider $(A,B)\in S_k$. We denote $\Lambda=A\cup B$. Let $\alpha\in\{1,\ldots,D\}$ be such that $\Pi_\alpha A=\Pi_\alpha B$. Without loss of generality, by rotation, we can assume that $\alpha=1$.

We fix a coarse-graining parameter $t$ to be an integer such that
\[
t > \{2, C_{\Gamma} R\}
\]
and we define the width-$4t$ ``columns'' centered at $j\in \mathbb Z$,
\[
\mathcal C_j=([j-2t,j+2t]\times \mathbb R^{D-1})\cap \Lambda,\qquad j\in \mathbb Z
\]
as well as the associated coarse-grained Hamiltonians
\[
\tilde H_{j}=\sum_{X\subset \mathcal C_j} h_X.
\]
We write $Q_j$ for the ground state projector of $\tilde H_j$. 

In order to cover every interaction, we will consider two sets of ``columns'' (denoted as ``even'' and ``odd''), chosen in such a way that: columns belonging to the same set are disjoint (Lemma~\ref{lm:commuting}); columns from different sets which do intersect have a ``large'' (proportional to the coarse-graining parameter $t$) overlap.
\begin{defn}
    Let us define two sets of indices:
    \begin{equation}
    \mathcal{I}_e(t) := \{ (2+6j)\, t \mid j \in \mathds{Z} \},
    \quad 
    \mathcal{I}_o(t) := \{ (5+6j)\, t \mid j \in \mathds{Z} \}.
    \end{equation}
\end{defn}


\begin{lm}\label{lm:commuting}
  Let $\#\in \{e,o\}$ and let $m,n\in \mathcal I_\#(t)$ be distinct. Then $[Q_m,Q_{n}]=0$.
\end{lm}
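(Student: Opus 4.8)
The plan is to show that for distinct $m, n \in \mathcal{I}_\#(t)$ the columns $\mathcal{C}_m$ and $\mathcal{C}_n$ are ``far enough apart at the level of interactions'' that the two coarse-grained Hamiltonians $\tilde H_m$ and $\tilde H_n$ act on disjoint sets of sites, from which $[\tilde H_m, \tilde H_n] = 0$, and hence $[Q_m, Q_n]=0$ since each $Q_j$ is a spectral projector of $\tilde H_j$. First I would record that by definition $\tilde H_j = \sum_{X \subset \mathcal{C}_j} h_X$, and that $h_X$ is supported on $X$, so $\tilde H_j \in \mathcal A_{\mathcal C_j}$; therefore it suffices to prove $\mathcal{C}_m \cap \mathcal{C}_n = \emptyset$. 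Each column $\mathcal{C}_j$ is contained in the slab $[j - 2t+1, j+2t-1] \times \mathbb R^{D-1}$ in the first Euclidean coordinate, which has width strictly less than $4t$. Hence if $|m - n| \geq 4t$ the first-coordinate slabs are disjoint and so are the columns.

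The second step is the arithmetic: within a single parity class $\mathcal{I}_e(t) = \{(2+6j)t : j \in \mathbb Z\}$ or $\mathcal{I}_o(t) = \{(5+6j)t : j \in \mathbb Z\}$, two distinct elements differ by a multiple of $6t$, so $|m - n| \geq 6t > 4t$. This gives the required separation and completes the argument. I would also note that we do not even need to invoke that the $h_X$'s are projections or that $\Phi$ has range $R$ for this particular lemma — disjointness of supports is enough — although the constraint $t \geq \max\{2, C_\Gamma R\}$ will be used in later lemmas that compare $Q_j$ with ground state projectors of neighboring regions; here it is harmless.

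I do not expect a serious obstacle: the only point requiring a little care is bookkeeping the open-versus-closed intervals so that the width is genuinely less than $4t$ rather than exactly $4t$, which is why the columns were defined with $[j-2t+1, j+2t-1]$ rather than $[j-2t, j+2t]$; combined with the gap of $6t$ between same-parity indices this leaves comfortable room. A secondary subtlety is that the sites of $\Gamma$ are identified with their Euclidean embedding $\iota(\Gamma) \subset \mathbb R^D$, so ``contained in the slab'' is to be read with respect to that embedding; since the column was defined precisely by intersecting a Euclidean slab with $\Lambda$, this is automatic. The conclusion $[Q_m, Q_n]=0$ then follows because commuting operators have commuting spectral projections, and $Q_j$ is by definition the ground state projector of $\tilde H_j$.
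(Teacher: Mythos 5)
Your proof is correct and rests on the same geometric observation as the paper's: two distinct indices in the same parity class differ by at least $6t$, while each column occupies a first-coordinate slab of width less than $4t$, so the columns are disjoint. The only difference is in the final step. You stop at disjointness of $\mathcal{C}_m$ and $\mathcal{C}_n$ and conclude $[Q_m,Q_n]=0$ because $\tilde H_j=\sum_{X\subset\mathcal{C}_j}h_X$ and hence $Q_j$ lie in $\mathcal{A}_{\mathcal{C}_j}$, so the two projectors act on disjoint tensor factors. The paper instead records that the Euclidean gap between the slabs is at least $2t+2$, converts this via Assumption \ref{assum:bg} to a graph distance of at least $C_\Gamma^{-1}(2t+2)\ge 2R$, and then invokes the finite interaction range. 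For the lemma as stated your shorter route is sufficient, and your remark that neither the range $R$ nor the constraint $t\ge C_\Gamma R$ is actually needed here is accurate: since each $\tilde H_j$ only sums over $X\subset\mathcal{C}_j$, no interaction term can straddle two disjoint columns by construction. The paper's extra margin of $2R$ in graph distance is not used elsewhere in the proof of this particular lemma, so nothing is lost by your streamlining; the constraint on $t$ does matter later (in Lemma \ref{lm:poly}, where supports grow by $R$ per layer), exactly as you anticipate.
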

\begin{proof}[Proof of Lemma \ref{lm:commuting}]
    We start by observing that, for any distinct pair of indices $m,n\in \mathcal I_\#(t)$, the Euclidean distance between the columns $\mathcal C_m$ and $\mathcal C_n$ is at least $2t$.
   In fact, when $\#=e$, $Q_m$ is supported on $[6jt,6jt + 4t] \times \mathbb{R}^{D-1}$ and $Q_{n}$ is supported on $[6j't,6j't+4t]\times \mathbb{R}^{D-1}$ for two distinct $j,j'\in\mathbb Z$. The smallest distance possible between the two intervals is $2t$ (which occurs when $\abs{j-j'}=1$). For $\#=o$, the argument is analogous.
   
    By \eqref{eq:graph}, the $\Gamma$-distance between $\mathcal C_n$ and $\mathcal C_m$ is at least $C_\Gamma^{-1}2t$. Since $t> \max\{2, C_{\Gamma} R\}$, we have 
   $C_\Gamma^{-1}2t > 2R$. Now $[Q_m,Q_{n}]=0$ follows because the interaction range is $R$. 
\end{proof}

Lemma \ref{lm:commuting} allows us to group the $Q_m$ operators into 
the even and the odd ones, which commute amongst each other.


\begin{defn}
The \emph{$t$-coarse grained detectability lemma operator} $DL(t)$ is given by
\begin{equation}
    DL(t) = \prod_{m \in \mathcal{I}_e(t)} Q_{m} \cdot \prod_{m\in \mathcal{I}_o(t)} Q_{m}. 
\end{equation}
\end{defn}

\subsection{Basic properties of the interaction}
We note two basic properties of the interaction $\Phi$ that follow from Assumption \ref{assum:Phi} and that are used in the Detectability Lemma.
\begin{lm}\label{lm:basicPhi}
There exist integers $g,L\geq 1$ such that the following holds.
    \begin{itemize}
        \item 
   $\Phi$ can be decomposed into $L$ layers, i.e., there exists a disjoint decomposition (``$L$-coloring'') of
\begin{equation}\label{eq:gcoloring}
    \setof{X\subset \Gamma}{\mathrm{diam}\, (X)\leq R}
=\mathcal X_1\sqcup \mathcal X_2\sqcup\ldots\sqcup \mathcal X_L
\end{equation}
such that for every $X,Y\in\mathcal X_i$, it holds that $X\cap Y=\emptyset$.
\item For every subset $X\subset \Gamma$ such that $\Phi(X)\neq 0$, there exist at most $g$ subsets $Y\neq X$ such that $[\Phi(X),\Phi(Y)]\neq 0$.
 \end{itemize}
\end{lm}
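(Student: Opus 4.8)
The plan is to prove Lemma~\ref{lm:basicPhi} directly from Assumption~\ref{assum:bg} and the finite-range condition in Assumption~\ref{assum:Phi}(i), the key intermediate fact being that $\Gamma$ has bounded local geometry: the number of vertices of $\Gamma$ contained in any Euclidean ball of fixed radius is bounded by a constant $N=N(D,C_\Gamma,R)$. Indeed, if $i_1,\dots,i_m$ are vertices of $\Gamma$ lying in a Euclidean ball of radius $\rho$, then by \eqref{eq:graph} any two of them satisfy $d_\Gamma(i_a,i_b)\leq C_\Gamma|\iota(i_a)-\iota(i_b)|\leq 2C_\Gamma\rho$, and conversely distinct vertices satisfy $|\iota(i_a)-\iota(i_b)|\geq C_\Gamma^{-1}d_\Gamma(i_a,i_b)\geq C_\Gamma^{-1}$ since the graph distance between distinct vertices is at least $1$. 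Hence the Euclidean balls of radius $\tfrac{1}{2}C_\Gamma^{-1}$ centered at the $\iota(i_a)$ are pairwise disjoint and all contained in a ball of radius $\rho+\tfrac12 C_\Gamma^{-1}$; a volume comparison in $\mathbb R^D$ bounds $m$ by $(2C_\Gamma\rho+1)^D$. In particular, the maximal degree of $\Gamma$ is bounded (take $\rho=1$), recovering the remark made after Assumption~\ref{assum:bg}, and more generally the number of vertices within graph-distance $r$ of a fixed vertex is at most $(2C_\Gamma^2 r+1)^D$.

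For the first bullet (the $L$-coloring), I would build a proper coloring of the ``intersection graph'' $\mathcal G$ whose vertices are the finite subsets $X\subset\Gamma$ with $\mathrm{diam}(X)\leq R$ and whose edges join $X,Y$ with $X\cap Y\neq\emptyset$. A greedy coloring of $\mathcal G$ uses at most $1+\Delta(\mathcal G)$ colors, so it suffices to bound the degree of $\mathcal G$, i.e., to bound the number of diameter-$\leq R$ sets $Y$ meeting a fixed such $X$. Any such $Y$ contains some vertex of $X$ and all of $Y$ lies within graph-distance $R$ of that vertex (since $\mathrm{diam}(Y)\leq R$), hence within graph-distance $R$ of a fixed vertex of $X$ after also accounting for $\mathrm{diam}(X)\leq R$, so $Y$ is a subset of a fixed vertex set of cardinality at most $N':=(2C_\Gamma^2(2R)+1)^D$; there are at most $2^{N'}$ such subsets. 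This yields a finite bound on $\Delta(\mathcal G)$ and hence an admissible $L$.

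For the second bullet, I would argue even more simply: if $[\Phi(X),\Phi(Y)]\neq 0$ then $\Phi(X)\neq 0\neq\Phi(Y)$, so $\mathrm{diam}(X),\mathrm{diam}(Y)\leq R$ (by Assumption~\ref{assum:Phi}(i)) and, since $\Phi(X)\in\mathcal A_X$ and $\Phi(Y)\in\mathcal A_Y$ act nontrivially only on $X$ resp.\ $Y$, non-commutativity forces $X\cap Y\neq\emptyset$. Thus $Y$ ranges over neighbors of $X$ in the graph $\mathcal G$ above, and the same counting as in the previous paragraph gives $g=2^{N'}-1$ (or any convenient upper bound). One can then take a common value of the two constants, e.g.\ set both $g$ and $L$ to $2^{N'}$, so the statement holds as written with a single pair of integers.

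The routine parts are the volume-packing estimate and the greedy-coloring bound; the only mild subtlety — and the ``hard part'' in the sense of being the place an argument could go wrong — is to make sure the geometric bound is applied in graph distance rather than Euclidean distance when counting the sets $Y$, since the coloring and commutation conditions are combinatorial (about $\Gamma$ and its subsets) while Assumption~\ref{assum:bg} is about the embedding; the two-sided comparison \eqref{eq:graph} is exactly what bridges this, and the constants $C_1,C_2$ notwithstanding, all bounds depend only on $D$, $C_\Gamma$, and $R$, as required for $g,L$ to be universal.
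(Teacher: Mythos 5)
Your proof is correct and follows essentially the same route as the paper: a Euclidean packing argument derived from \eqref{eq:graph} shows that the balls $B^\Gamma_r(i)$ have uniformly bounded cardinality, non-commutation forces $X\cap Y\neq\emptyset$ and hence confines $Y$ to a ball of radius $2R$, and counting subsets of that ball bounds both the relevant degree and $g$. The only cosmetic difference is that you properly color the intersection graph greedily (bound $1+\Delta(\mathcal G)$) where the paper invokes the hypergraph edge-chromatic bound $L\leq\left\lfloor 3\Delta(\Gamma)/2\right\rfloor$ from the combinatorics literature; since these constants do not enter the main results, the difference is immaterial.
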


\begin{proof}[Proof of Lemma \ref{lm:basicPhi}]
    We first observe that Assumption \ref{assum:bg} implies that, for any $r>0$ and $i\in \Gamma$, the ball
    \[
    B^\Gamma_{r}(i):=\setof{j\in \Gamma}{d_\Gamma(i,j)\leq r}
    \]
    has bounded cardinality uniformly in $i$. This follows
    from the fact that $\iota(B^\Gamma_{r}(i))\subset \mathbb R^D$ is contained in the Euclidean ball $B_{C_\Gamma r}(i)$ and all its distinct points are at distance at least $C_\Gamma^{-1}>0$.
    
    Concerning the first bullet point, we 
    consider the hypergraph with vertices equal to $\Gamma$ and whose hyperedges are those $X\subset\Gamma$ for which $\Phi(X)\neq 0$. We write $\Delta(\Gamma)$ for the degree of the hypergraph, defined as usual as the maximal number of hyperedges intersecting at a vertex. By the finite-range assumption, given any $i\in \Gamma$, the number of $X$ containing $i$ is bounded by $2^{|B_{R}(i)|}$ and therefore this number also bounds the degree. As noted above, the bound on $|B_{R}(i)|$ is uniform in $i$ and so $\Delta(\Gamma)$ is bounded. 
        In this language, $L$ is called the edge-chromatic index of the hypergraph and a simple bound from graph theory is $L\leq \left\lfloor\frac{3\Delta(\Gamma)}{2}\right\rfloor$.

    Concerning the second bullet point, we note that $[\Phi(X),\Phi(Y)]\neq 0$ implies $X\cap Y\neq \emptyset$. By the finite-range assumption, $X$ and $Y$ must therefore lie in a $B_{2R}(i)$ for some $i\in \Gamma$. Hence, $g\leq 2^{|B_{2R}(i)|}-1$. As noted above, the bound on $|B_{R}(i)|$ is uniform in $i$ and so $g$ is bounded.
    \end{proof}

Of course, the bounds on $L$ and $g$ given in the above proof are far from optimal and much better bounds can usually be read off more simply. Since $L$ and $g$ do not enter in the main results, we do not dwell on this issue and refer readers interested in tighter bounds to the combinatorics literature, e.g., \cite{pippenger1989asymptotic} and the recent survey \cite{kang2023graph}.

\subsection{Refined Detectability Lemma}

The following refined Detectability Lemma is due to Gosset and Huang \cite{gosset2016correlation}; see also \cite{anshu2016simple,anshu2020entanglement}.

\begin{lm}[Refined Detectability Lemma]\label{lm:AAG} Let $k\geq 1$, $(A,B)\in S_k$ and $\gap(\mathcal F_k)\geq \lambda_k>0$. 
Denote $\Lambda=A\cup B$. For a normalized state $\ket{\phi}$ orthogonal to the ground state sector of $H_\Lambda$, it holds that
\begin{equation}
    \norm{DL(t) \ket{\phi}}\le   2 \exp\left(-\left(\frac{t}{C_\Gamma (L-1) R}-\frac{2L-1}{L-1}\right)\sqrt{\frac{\lambda_k}{1+g^2}} \right)
\end{equation}
\end{lm}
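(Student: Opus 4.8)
The plan is to reduce the coarse-grained operator $DL(t)$ to a standard detectability-lemma operator for a suitable auxiliary Hamiltonian and then invoke the original Gosset--Huang bound \cite{gosset2016correlation}. First I would treat each column projector $Q_m$, $m\in\mathcal I_e(t)\cup\mathcal I_o(t)$, as a single ``super-projector'' onto the ground state of the coarse-grained Hamiltonian $\tilde H_m$ supported on the slab $[6jt+1,6jt+4t-1]\times\mathbb R^{D-1}$. The key geometric observation is that the columns indexed by $\mathcal I_e(t)$ (resp.\ $\mathcal I_o(t)$) are mutually non-overlapping and in fact separated by Euclidean distance $\geq 2t$, while a column in $\mathcal I_e(t)$ and an adjacent one in $\mathcal I_o(t)$ overlap in a band of width $\Theta(t)$; together, all these slabs cover $\Lambda$ in the $\alpha=1$ direction. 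Thus the collection $\{Q_m\}$ behaves exactly like the projectors of a one-dimensional nearest-neighbor frustration-free chain with two ``colors'' (even/odd), and $DL(t)=\prod_{\mathcal I_e}Q_m\cdot\prod_{\mathcal I_o}Q_m$ is precisely the one-shot detectability operator of that effective chain.

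The second ingredient is a spectral-gap lower bound for this effective chain. Since each $\tilde H_m$ is a sub-Hamiltonian of $H_\Lambda$ obtained by restricting to a subset $\mathcal C_m$ that (after the $\iota$-identification and up to translation/permutation) fits inside some $R(k)$-type rectangle — here one uses that the column has bounded width $4t$ in one direction and is otherwise contained in $\Lambda\in\mathcal F_k$ — each $\mathcal C_m$ lies in $\mathcal F_k$, hence $\gap(\tilde H_m)\geq\gap(\mathcal F_k)\geq\lambda_k$. Moreover, the overlap structure is controlled: by Lemma~\ref{lm:basicPhi} each interaction term $h_X$ commutes with all but at most $g$ others, which bounds the ``interaction degree'' of the effective chain. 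One then applies the Gosset--Huang refined detectability lemma in its abstract form to the effective chain: for $\ket\phi$ orthogonal to $\ker H_\Lambda=\bigcap_m\operatorname{ran}Q_m$, one gets $\norm{DL(t)\ket\phi}\leq 2\exp(-c\,\sqrt{\lambda_k/(1+g^2)}\cdot N)$, where $N$ counts how many effective layers a single interaction term $h_X$ can fail to be detected by. The factor $\frac{t}{C_\Gamma(L-1)R}-2$ is exactly this count: an interaction $h_X$ has $\operatorname{diam}X\leq R$, hence (via \eqref{eq:graph}) Euclidean diameter $\leq C_\Gamma R$, and the $L$-coloring of Lemma~\ref{lm:basicPhi} spreads the terms over $L$ layers, so within a column of width $4t$ one fits roughly $t/(C_\Gamma(L-1)R)$ independent detection steps; the $-2$ absorbs boundary effects at the two ends of the column.

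Concretely, the steps in order: (1) record that $\ker H_\Lambda=\bigcap_{m}\operatorname{ran}Q_m$, since every $h_X$ with $X\subset\Lambda$ and $\Phi(X)\neq0$ is supported in some $\mathcal C_m$ (any $X$ of diameter $\leq R$ has Euclidean diameter $\leq C_\Gamma R\leq t$, so it fits in a width-$4t$ column), and conversely each $\mathcal C_j$ only contains such $h_X$; (2) verify $\mathcal C_m\in\mathcal F_k$ so $\gap(\tilde H_m)\geq\lambda_k$, and apply Lemma~\ref{lm:commuting} to get the even/odd commuting structure; (3) partition the terms $h_X$ appearing in $\tilde H_j$ by $L$-color and by how deep they sit in the column, producing the ``detection ladder'' of length $\lfloor t/(C_\Gamma(L-1)R)\rfloor$ or so; (4) quote the Gosset--Huang estimate \cite[Theorem/Lemma]{gosset2016correlation} with gap $\lambda_k$ and commutation degree $g$ for the effective model to conclude. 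The main obstacle I anticipate is step (3): carefully bookkeeping the geometry so that the abstract refined detectability lemma genuinely applies to the coarse-grained operators — i.e.\ ensuring the even/odd grouping, the $L$-coloring within each column, and the Euclidean-vs-graph distance conversions all line up to yield precisely the exponent $\left(\tfrac{t}{C_\Gamma(L-1)R}-2\right)$ rather than something merely of the same order. The gap-inheritance in step (2) also requires checking that the truncated column Hamiltonian is still frustration-free and that its ground space embeds correctly, which follows from Assumption~\ref{assum:FF} applied to $\mathcal C_m$.
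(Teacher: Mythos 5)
Your high-level framing (coarse-grain into columns, use the even/odd commuting structure, aim for exponential decay in $t$ at rate $\sqrt{\lambda_k}$) matches the paper, but the core mechanism is missing, and the step you defer to a citation is exactly the step that must be proved. The paper does \emph{not} apply a detectability-lemma estimate to the effective chain of super-projectors $Q_m$. It works with the \emph{microscopic} standard DL operator $T=T_L\cdots T_1$ built from the $L$-coloring of the terms $h_X$ (Lemma~\ref{lm:basicPhi}), and proves (Lemma~\ref{lm:poly}) that any polynomial $F$ with $F(0)=1$ of degree at most $\left\lceil\frac{t}{2C_\Gamma(L-1)R}-1\right\rceil$ can be inserted between the two halves: $DL(t)=\prod_{m\in\mathcal I_e(t)}Q_m\cdot F(\mathds{1}-T^\dagger T)\cdot\prod_{m\in\mathcal I_o(t)}Q_m$. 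This is a light-cone argument: powers of $T^\dagger T$ act trivially on the even ground spaces except on a region that grows by a bounded graph distance per layer, so up to the stated degree they never reach the complement of the support of the odd product. Dropping the outer products (Corollary~\ref{cor:Pperp}), choosing $F$ to be the shifted Chebyshev polynomial of Lemma~\ref{lm:step}, and feeding in the \emph{standard} DL bound $\norm{TP_\Lambda^\perp}^2\le(1+\lambda_k/g^2)^{-1}$ — which uses $\gap(H_\Lambda)\ge\lambda_k$ because $\Lambda=A\cup B\in\mathcal F_k$, not the gaps of the column Hamiltonians $\tilde H_m$, which play no role — yields $2\exp(-2q\sqrt{\gamma})$ with $\gamma=\lambda_k/(\lambda_k+g^2)$, i.e.\ the stated bound.

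Two concrete problems with your route. First, there is no ``abstract form'' of the Gosset--Huang coarse-grained estimate that applies to a chain of super-projectors built over an $L$-layer, range-$R$ Hamiltonian on an embedded graph; the published versions are for 1D nearest-neighbor chains, and extending the polynomial-insertion identity to this generality is precisely the content of Lemma~\ref{lm:poly} (the paper states explicitly that it generalizes Lemma 3.1 of \cite{anshu2020entanglement}). You flag this bookkeeping as ``the main obstacle'' but do not carry it out, and it is essentially the entire proof. Second, the fallback your step (4) actually describes — applying a detectability lemma to the effective model $\sum_m(\mathds{1}-Q_m)$ with gap of order $\lambda_k$ and bounded commutation degree — would produce a bound of the form $(1+\lambda_k/g_{\mathrm{eff}}^2)^{-1/2}$, a constant independent of $t$, not exponential decay in $t$. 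The exponential decay comes only from the degree-$\Theta(t)$ Chebyshev amplification of the microscopic operator $\mathds{1}-T^\dagger T$, which your proposal never constructs; correspondingly, your step (2) (gap inheritance for the columns $\mathcal C_m$) is not needed anywhere.
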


We now prove Lemma \ref{lm:AAG}.
The first step is to show that one can ``smuggle'' a polynomial $F$ with $F(0)=1$ and of controlled degree between the products defining $DL(t)$. This polynomial will later be chosen as a suitable Chebyshev polynomial.

Recall the notation \eqref{eq:gcoloring} for the $L$ layers. The \emph{standard detectability lemma operator} for $\tilde H_\Lambda$ from \eqref{eq:projectionH} is defined as
\begin{equation}
   T:=T_L\ldots T_1,\qquad T_\beta: = \prod_{X\in \mathcal X_\beta\cap \Lambda} (1-h_X),\quad \beta\in\{1,\ldots,L\}.
    \end{equation}

    The following lemma generalizes Lemma 3.1 in \cite{anshu2020entanglement} to any number of layers, interaction range, and Euclidean-embedded graphs.

\begin{lm}\label{lm:poly}
Let $F$ be a polynomial of degree at most $\left\lceil\frac{1}{2(L-1)} \left(\frac{t}{C_\Gamma R} -2L+1 \right)\right\rceil$, such that $F(0)=1$. Then
\begin{equation}\label{eq:F}
    DL(t) = \prod_{m \in \mathcal{I}_e(t)} Q_{m} \cdot F\left(\mathds{1} - T^\dag T\right)  \cdot \prod_{m\in \mathcal{I}_o(t)} Q_{m}.
\end{equation}
\end{lm}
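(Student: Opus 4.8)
The plan is to expose the factor $F(\mathds{1}-T^\dag T)$ as something that can be freely inserted between the two commuting blocks $\prod_{m\in\mathcal I_e(t)}Q_m$ and $\prod_{m\in\mathcal I_o(t)}Q_m$ without changing the product. The crucial observation is that $\mathds{1}-T^\dag T$ is, up to the geometry, a ``local'' operator relative to the coarse-grained columns: $T=T_L\cdots T_1$ is a product of $L$ layers, each layer $T_\beta$ being a product of commuting terms $(1-h_X)$ with $\mathrm{diam}(X)\le R$. Therefore $T$ — and hence $T^\dag T$ and any polynomial $F(\mathds 1-T^\dag T)$ of degree $d$ — is supported within Euclidean distance at most $C_\Gamma(L-1)R\cdot(2d+1)$ (roughly) of... no: more carefully, each power $(T^\dag T)$ spreads support by a bounded amount. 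So first I would record precisely how far $(T^\dag T)^d$ spreads: a single factor $T$ is supported everywhere, but the key point for the detectability-lemma argument is the \emph{light-cone} statement that $F(\mathds1-T^\dag T)$, acting near a column, only feels the terms within a slab of width controlled by $\deg F$ times the per-layer range. The cleanest route is to choose the degree bound $\lceil \frac{t}{2C_\Gamma(L-1)R}-1\rceil$ precisely so that this light cone stays inside the $4t$-wide columns $\mathcal C_m$ used in $Q_m$, separated by the gaps $\mathcal I_e,\mathcal I_o$ leave between them.

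Concretely, the steps I would carry out: (1) Expand $F(\mathds 1 - T^\dag T)=\sum_{j=0}^{d}c_j(\mathds 1-T^\dag T)^j$ with $\sum_j c_j = F(0)$? No — with $F(0)=1$ meaning $\sum_j c_j\cdot 0^{?}$... rather, writing $G:=\mathds1-T^\dag T$, one has $F(G)=\sum_j c_j G^j$ and the $j=0$ term contributes $c_0\mathds 1$ with $c_0=F(0)=1$ only if all other terms vanish when sandwiched — so the real content is that $G^j$ for $j\ge 1$ is annihilated from both sides by the $Q_m$ products. (2) Show $T^\dag T$ commutes with — or more precisely is absorbed by — each $Q_m$ in the sense that $Q_m T^\dag T Q_m$-type cross terms either vanish or reproduce $Q_m$: the point is that $Q_m$ projects onto $\ker \tilde H_{\mathcal C_m}$, and every $h_X$ with $X\subset\mathcal C_m$ kills $Q_m$, so $(1-h_X)Q_m=Q_m$ for such $X$. (3) Decompose $T$ into the part supported inside $\bigcup_m \mathcal C_m$ (for the $e$ or $o$ family) and the rest; use the column widths and the index spacing (multiples of $6t$, offsets $2t$ vs $5t$) together with Lemma \ref{lm:commuting}'s geometry and the range-$R$ locality to argue that only the ``inside'' part of $T^j$ — equivalently $(T^\dag T)^j$ — reaches the columns, provided $j\le d$. (4) Since the inside part is built from $(1-h_X)$ factors each of which acts as the identity on the relevant $Q_m$, conclude $(T^\dag T)^j$ sandwiched between the two $Q$-blocks equals the same blocks times identity for $j\le d$, hence $\prod Q_m\cdot G^j\cdot\prod Q_m = \prod Q_m\cdot\prod Q_m = DL(t)$, and summing against $c_j$ with $\sum c_j=F(0)=1$...

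— here I should be careful: $\sum_j c_j$ is \emph{not} $F(0)$; rather $F(0)=c_0$. So the correct bookkeeping is: for $j\ge 1$, $(T^\dag T)^j$ still has the form $\mathds 1 - (\text{positive stuff})$? That is false. The genuine mechanism, following Anshu--Arad--Gosset and Lemma 3.1 of \cite{anshu2020entanglement}, is that $(\mathds 1-T^\dag T)^j$ for $j\ge 1$ \emph{is supported} (after the light-cone truncation) on operators $W$ with $W$ commuting past the $Q_m$'s \emph{and} satisfying $Q_m W Q_m$-patterns that telescope back to $Q_m$; more honestly, one writes $G^j=(\mathds 1-T^\dag T)^j$ and uses that $T^\dag$ acts as identity to the \emph{left} of each $Q_m$-block and $T$ as identity to the right, after truncating $T$ to its column-supported part whose error is zero because the non-column part commutes through $Q_m$ and then hits $T^\dag$'s analogous part — this is exactly the ``smuggling'' computation. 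So step (4) is really: prove the identity $\prod_{m\in\mathcal I_e}Q_m\cdot(\mathds1-T^\dag T)\cdot\prod_{m\in\mathcal I_o}Q_m = 0$ when restricted appropriately, no — prove $\prod_{m\in\mathcal I_e}Q_m\cdot T^\dag T\cdot\prod_{m\in\mathcal I_o}Q_m=\prod_{m\in\mathcal I_e}Q_m\cdot\prod_{m\in\mathcal I_o}Q_m$ after truncation, so that $\prod Q_m\cdot G\cdot\prod Q_m=0$, and then $\prod Q_m\cdot F(G)\cdot\prod Q_m = F(0)\prod Q_m\prod Q_m = DL(t)$ by linearity, using that the degree bound keeps every monomial $G^j$, $j\le d$, within the non-interacting light cone. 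The main obstacle — and the step I'd spend the most care on — is exactly this geometric/light-cone accounting: tracking how the per-layer range $R$, the number of layers $L$, the Euclidean-to-graph distortion $C_\Gamma$, and the column width $4t$ combine to yield the clean degree bound $\lceil \frac{t}{2C_\Gamma(L-1)R}-1\rceil$, and verifying that the offsets defining $\mathcal I_e(t),\mathcal I_o(t)$ (namely $(2+6j)t$ and $(5+6j)t$) really do leave enough room between consecutive active columns for the truncated $T^\dag T$ to slot in. Everything else (the operator identities $(1-h_X)Q_m=Q_m$, expanding the polynomial, commuting disjoint-support factors) is routine bookkeeping once the geometry is pinned down.
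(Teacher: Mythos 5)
Your proposal follows essentially the same route as the paper: this is the Anshu--Arad--Gosset ``smuggling'' argument (Lemma 3.1 of \cite{anshu2020entanglement}), reduced by linearity to showing $\prod_{m\in\mathcal I_e(t)}Q_m\,(T^\dag T)^q\,\prod_{m\in\mathcal I_o(t)}Q_m=DL(t)$ for every $q$ up to the degree bound, with the absorption identity $(1-h_X)Q_m=Q_m$ for $X\subset\mathcal C_m$ (frustration-freeness) as the engine and a light cone growing by $R$ per layer as the constraint. Your final bookkeeping of the constant term is also right: only $c_0=F(0)=1$ survives the sandwich, since $\prod_e Q\,(\mathds 1-T^\dag T)^j\prod_o Q=(1-1)^jDL(t)=0$ for $j\ge 1$.

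Two caveats. First, the step you defer (``the step I'd spend the most care on'') is the actual content of the lemma, and your proposal does not execute it. The paper does so by introducing the truncated layers $T_\beta^S$ and the increasing sets $S_0\subset S_1\subset\cdots$, where $S_0$ is the \emph{complement} of the support of the even block and $S_{j}$ is the graph-$R$-neighborhood of $S_{j-1}$: the factors of each successive layer \emph{not} meeting the current $S_j$ commute past the already-truncated factors and are absorbed into the left block, so what survives is supported in a region growing by $R$ in graph distance per layer; one application of $T^\dag T$ contributes $2L-2$ layers (after the initial $T_1$), the Euclidean gap from $S_0$ to the complement of the odd support is $t$, hence graph distance $\ge C_\Gamma^{-1}t$, and $q(2L-2)R< C_\Gamma^{-1}t$ yields exactly $q\le\lceil t/(2C_\Gamma(L-1)R)-1\rceil$; the remaining truncated factors are then absorbed into the right block. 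Second, your write-up repeatedly wavers on which part of $T$ is truncated versus absorbed (at one point you speak of ``truncating $T$ to its column-supported part''); it is the column-supported part that is \emph{absorbed} and the part anchored at the inter-column region $S_0$ that is \emph{kept} and tracked. Pinning down that orientation, and the arithmetic above, is what turns your outline into a proof.
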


Figure 5 in \cite{anshu2020entanglement} is an instructive depiction of the proof idea in one dimension.

\begin{proof}
For $\beta\in\{1,\ldots,L\}$ and $S \subset \Lambda$, define 
\[
T_\beta^S = \prod_{\substack{X\in\mathcal X_\beta\\ X\cap S\neq \emptyset}} (1-h_X).
\]
Let 
\[ 
S_0 = \Lambda \setminus \supp\l(\prod_{m \in \mathcal{I}_e(t)} Q_{m} \r)
\]
and define iteratively for $j\ge 1$,
\[
    S_j = \setof{ i \in \Lambda}{d_\Gamma(i,S_{j-1}) \le R }.
\]
Note that, by construction, $S_0 \subset S_1 \subset \cdots \subset S_n$ for a suitable $n\geq 1$.

Because of frustration freeness, we have that
\[
\prod_{m \in \mathcal{I}_e(t)} Q_{m} \cdot T_1 = \prod_{m \in \mathcal{I}_e(t)} Q_{m} \cdot T_1^{S_1}.
\]
Since each $T_\beta$ is a self-adjoint  projection, we have that 
\[
T^\dag T= T_1T_2\ldots T_{L-1}T_L T_{L-1}\ldots T_1,
\]
and so $(T^\dag T)^q = (T_1T_2\ldots T_{L-1}T_LT_{L-1}\ldots T_2)^q  T_1$. Therefore
\[
\prod_{m \in \mathcal{I}_e(t)} Q_{m} \cdot (T^\dag T)^q = \prod_{m \in \mathcal{I}_e(t)} Q_{m} \cdot T_1^{S_1} T_2^{S_2} \cdots  T_{2}^{S_{q(2L-2)}} T_1^{S_{q(2L-2)+1}}. 
\]
Now, if $S_{q(2L-2)+1}$ is contained in the support of $\prod_{m\in \mathcal{I}_o(t)} Q_{m}$, we have that
\[
T_1^{S_0} T_2^{S_1} \cdots  T_{2}^{S_{L(2L-2)}} T_1^{S_{q(2L-2)+1}}\cdot \prod_{m\in \mathcal{I}_o(t)} Q_{m} = \prod_{m\in \mathcal{I}_o(t)} Q_{m}.
\]
By construction, the set $S_0$ is at Euclidean distance $t$ from the complement of the support of $\prod_{m\in \mathcal{I}_o(t)} Q_{m}$ and so its graph distance to the complement of the support of $\prod_{m\in \mathcal{I}_o(t)} Q_{m}$ is at least $C_\Gamma^{-1} t$. By induction,  $S_j$ is at graph distance $C_\Gamma^{-1}t-jR$ from the complement of the support of $\prod_{m\in \mathcal{I}_o(t)} Q_{m}$. Therefore, this graph distance is non-zero as long as $j<t/(C_\Gamma R)$. Thus by choosing $q \leq\left\lceil\frac{1}{2(L-1)} \left(\frac{t}{C_\Gamma R} -2L+1 \right)\right\rceil$, we can show that
\[
\prod_{m \in \mathcal{I}_e(t)} Q_{m} \cdot (T^\dag T)^q \cdot \prod_{m\in \mathcal{I}_o(t)} Q_{m} = DL(t).
\]
By linearity, we obtain the stated claim.
\end{proof}

The identity \eqref{eq:F} is used to produce the polynomial and the outside products are then dropped, as summarized in the next corollary. Recall that $P_\Lambda$ denotes the projection onto the ground state of $\tilde H_\Lambda$. 

\begin{cor}\label{cor:Pperp}
We have
\begin{equation}
\norm{DL(t) P_\Lambda^\perp} \le \inf_{F} F_{*}(\norm{TP_\Lambda^\perp}^2),
\end{equation}
where the infimum runs over polynomials of degree at most $\left\lceil\frac{1}{2(L-1)} \left(\frac{t}{C_\Gamma R} -2L+1 \right)\right\rceil$ satisfying $F(0)=1$, and we set
\begin{equation}
    F_{*}(\epsilon) := \sup_{1-\epsilon \le x \le 1 }  \abs{F(x)}, \quad \forall \epsilon \in [0, 1].
\end{equation}
\end{cor}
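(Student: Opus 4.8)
The plan is to prove Corollary~\ref{cor:Pperp} by taking norms in the identity~\eqref{eq:F} of Lemma~\ref{lm:poly} and then exploiting commutation structure. Fix a polynomial $F$ of degree at most $\left\lceil\frac{t}{2C_\Gamma (L-1) R}-1\right\rceil$ with $F(0)=1$, and apply both sides of~\eqref{eq:F} to a vector in the range of $P_\Lambda^\perp$. First I would note that each $Q_m$ is an orthogonal projection, hence a contraction, and that $\norm{\prod_{m\in\mathcal I_e(t)} Q_m}\le 1$ and $\norm{\prod_{m\in\mathcal I_o(t)} Q_m}\le 1$; so it suffices to bound $\norm{F(\mathds 1 - T^\dagger T)\,\prod_{m\in\mathcal I_o(t)} Q_m\, P_\Lambda^\perp}$, and in fact, dropping the left product entirely, $\norm{DL(t)P_\Lambda^\perp}\le \norm{F(\mathds 1 - T^\dagger T)\,\prod_{m\in\mathcal I_o(t)} Q_m\,P_\Lambda^\perp}$.

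The key spectral observation is the following. The operator $\mathds 1 - T^\dagger T$ is self-adjoint with spectrum in $[0,1]$ (since $T$ is a product of orthogonal projections, $\norm{T}\le 1$, so $0\le T^\dagger T\le \mathds 1$). By frustration-freeness, $T P_\Lambda = P_\Lambda$, hence $T^\dagger T P_\Lambda = P_\Lambda$, so $\mathds 1 - T^\dagger T$ annihilates the ground-state sector and in particular commutes with $P_\Lambda$ and with $P_\Lambda^\perp$. Restricted to the range of $P_\Lambda^\perp$, the operator $\mathds 1 - T^\dagger T$ therefore has spectrum contained in $[0,\,\norm{(\mathds 1 - T^\dagger T)P_\Lambda^\perp}]$. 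Now I would use the standard Detectability-Lemma spectral gap estimate — $\norm{T P_\Lambda^\perp}^2 \le 1$ is automatic, but more precisely $\norm{(\mathds 1 - T^\dagger T)P_\Lambda^\perp} = 1 - \norm{T P_\Lambda^\perp}^2$ is not quite what we want; rather, what matters is that the spectrum of $\mathds 1 - T^\dagger T$ on the orthogonal complement lies in $[1-\norm{TP_\Lambda^\perp}^2,\,1]$. Indeed $\langle \psi, T^\dagger T\psi\rangle \le \norm{TP_\Lambda^\perp}^2\norm\psi^2$ for $\psi\in\mathrm{ran}\,P_\Lambda^\perp$, so $\langle\psi,(\mathds 1-T^\dagger T)\psi\rangle \ge (1-\norm{TP_\Lambda^\perp}^2)\norm\psi^2$, giving $\mathrm{spec}\big((\mathds 1-T^\dagger T)|_{\mathrm{ran}\,P_\Lambda^\perp}\big)\subset[1-\norm{TP_\Lambda^\perp}^2,1]$.

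With $\epsilon := \norm{TP_\Lambda^\perp}^2$, set $x = \mathds 1 - T^\dagger T$; then on $\mathrm{ran}\,P_\Lambda^\perp$ we have $1-\epsilon\le x\le 1$, so by the spectral theorem $\norm{F(\mathds 1 - T^\dagger T)\,P_\Lambda^\perp}\le \sup_{1-\epsilon\le x\le 1}\abs{F(x)}$. But I want the infimum $F_*(\epsilon)=\inf_{1-\epsilon\le x\le 1}\abs{F(x)}$, not the supremum — so the right move is instead to factor out $F(\mathds 1-T^\dagger T)$ on the \emph{other} side: rewrite~\eqref{eq:F} so that $\prod_{m\in\mathcal I_e(t)}Q_m = \prod_{m\in\mathcal I_e(t)}Q_m \cdot \big(F(\mathds 1 - T^\dagger T)\big)^{-1} F(\mathds 1 - T^\dagger T)$ is \emph{not} the approach; rather one observes $DL(t)^\dagger DL(t) = \prod Q_o\cdot F(\cdot)\,\prod Q_e\,\prod Q_e\, F(\cdot)\,\prod Q_o$ and that $\prod_{m\in\mathcal I_e(t)}Q_m$ restricted appropriately... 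Let me instead state the cleaner route: since $P_\Lambda$ is in the kernel of every factor's complement, $F(\mathds 1 - T^\dagger T)$ fixes $P_\Lambda$, and one writes $DL(t)P_\Lambda^\perp = DL(t)P_\Lambda^\perp F(\mathds 1 - T^\dagger T)^{-1}\cdot F(\mathds 1 - T^\dagger T)$ only formally; the honest argument is that for \emph{any} unit vector $\ket\psi$ in $\mathrm{ran}\,P_\Lambda^\perp$, applying~\eqref{eq:F}, $\norm{DL(t)\ket\psi} \le \norm{F(\mathds 1 - T^\dagger T)\ket{\psi'}}$ where $\ket{\psi'} = \prod_{m\in\mathcal I_o(t)}Q_m\ket\psi$, and $\ket{\psi'}$ again lies in $\mathrm{ran}\,P_\Lambda^\perp$ (since $Q_m P_\Lambda = P_\Lambda$ forces $Q_m$ to preserve the complement), with $\norm{\ket{\psi'}}\le 1$ — but to get the \emph{infimum} I use that $\ket{\psi'}$ is supported, spectrally in $\mathds 1 - T^\dagger T$, on $[1-\epsilon,1]$, and then $\norm{F(\mathds 1 - T^\dagger T)\ket{\psi'}} \le \big(\sup_{[1-\epsilon,1]}\abs F\big)\norm{\ket{\psi'}}$. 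I expect the main obstacle — and the point to get exactly right — is precisely this: reconciling the $\sup$ that the spectral theorem naturally produces with the $\inf$ appearing in the statement of $F_*$; the resolution is that one does not bound $\norm{DL(t)P_\Lambda^\perp}$ directly this way but rather bounds $\norm{DL(t)^q P_\Lambda^\perp}^{1/q}$ or, more simply, notes that Corollary~\ref{cor:Pperp} as literally stated must be using $F_*$ in a context where $F$ is later chosen so that $F_*$ and $\sup\abs F$ coincide up to harmless constants (a Chebyshev polynomial is monotone on $[1-\epsilon,1]$ once shifted, so $\sup_{[1-\epsilon,1]}\abs F = \abs{F(1-\epsilon)}$ and $\inf_{[1-\epsilon,1]}\abs F=\abs{F(1-\epsilon)}$ as well when $F$ is monotone there) — so I would verify that for monotone $F$ on $[1-\epsilon,1]$ one has $\sup = F_*$ on that interval, making the bound $\norm{DL(t)P_\Lambda^\perp}\le F_*(\norm{TP_\Lambda^\perp}^2)$ valid, then take the infimum over admissible $F$ to conclude.
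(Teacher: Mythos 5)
Your core argument is exactly the paper's: invoke Lemma~\ref{lm:poly}, drop the two outer products of projections (which are contractions and commute with $P_\Lambda$ since $Q_mP_\Lambda=P_\Lambda$ by frustration-freeness, so they preserve $\mathrm{ran}\,P_\Lambda^\perp$), and then use $(1-\norm{TP_\Lambda^\perp}^2)\,\mathds{1}\le P_\Lambda^\perp(\mathds{1}-T^\dag T)P_\Lambda^\perp\le\mathds{1}$ together with the spectral theorem to get $\norm{F(\mathds{1}-T^\dag T)P_\Lambda^\perp}\le\sup_{1-\epsilon\le x\le 1}\abs{F(x)}$ with $\epsilon=\norm{TP_\Lambda^\perp}^2$. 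That is precisely the published proof, and your second paragraph and the ``honest argument'' passage carry it out correctly.

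The $\inf$ in the definition of $F_*$ is a typo for $\sup$, and you were right to be suspicious of it: with a literal infimum the corollary is false (take any admissible $F$ with a zero in $[1-\epsilon,1]$, e.g.\ a degree-one polynomial, and the right-hand side is $0$), and the downstream application in Lemma~\ref{lm:AAG} uses Lemma~\ref{lm:step}, which is a \emph{uniform} bound on $[\gamma,1]$, i.e.\ a supremum bound. However, your proposed reconciliation is wrong and you should drop it: $\mathrm{Step}_{q,\gamma}$ is not monotone on $[1-\epsilon,1]$ --- its argument $\frac{2(1-x)}{1-\gamma}-1$ sweeps essentially all of $[-1,1]$ as $x$ runs over that interval, so $T_q$ oscillates and $\mathrm{Step}_{q,\gamma}$ generically has zeros there, making $\inf\abs{F}=0$ while $\sup\abs{F}\le 2e^{-2q\sqrt{\gamma}}$ (and even for a monotone, sign-definite $F$ the sup and inf sit at opposite endpoints, so they do not coincide). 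The correct move is simply to state and prove the corollary with $\sup$ in place of $\inf$; that is what your spectral argument delivers and what the proof of Lemma~\ref{lm:AAG} actually needs.
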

\begin{proof}[Proof of Corollary \ref{cor:Pperp}]
Let $F$ be any of the admissible polynomials. By Lemma \ref{lm:poly},
\[ \norm{DL(t) P_\Lambda^\perp} \le \norm{F(\mathds{1} - T^\dag T)P_\Lambda^\perp}.\]
Then
\[ (1- \norm{TP_\Lambda^\perp}^2)\, \mathds{1} \le P_\Lambda^\perp(\mathds{1} - T^\dag T)P_\Lambda^\perp \le \mathds{1} \]
yields Corollary \ref{cor:Pperp}.
\end{proof}

We recall from \cite{arad2013area} a basic fact about Chebyshev polynomials. We let $\mathcal T_q$ denote the degree-$q$ Chebyshev polynomial.

\begin{lm}[\cite{arad2013area}]\label{lm:step}
For every integer $q\geq 1$ and number $\gamma\in (0,1)$, let
\[
\mathrm{Step}_{q,\gamma}(x):=\frac{\mathcal T_q\left(\frac{2(1-x)}{1-\gamma}-1\right)}{\mathcal T_q\left(\frac{2}{1-\gamma}-1\right)},\qquad x\in\mathbb R.
\]
Then
\begin{equation}
    |\mathrm{Step}_{q,\gamma}(x)|\leq 2\exp\left(-2q\sqrt{\gamma}\right) ,\qquad \gamma\leq x\leq 1.
\end{equation}
\end{lm}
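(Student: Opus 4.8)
The final statement to prove is Lemma~\ref{lm:step}, the Chebyshev step-function bound. The plan is to follow the classical argument from \cite{arad2013area}, which rests on two elementary facts about $T_q$: that $|T_q(y)|\le 1$ for $y\in[-1,1]$, and that $T_q(y)$ grows rapidly outside this interval, with $T_q(1+\eta)\ge \tfrac12 e^{2q\sqrt{\eta}}$ (or a comparable lower bound) for small $\eta>0$.

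First I would examine where the argument of $T_q$ lands in the two cases. The affine map $x\mapsto \frac{2(1-x)}{1-\gamma}-1$ sends $x=\gamma$ to $1$ and $x=1$ to $-1$; hence for $x\in[\gamma,1]$ the argument $\frac{2(1-x)}{1-\gamma}-1$ ranges over $[-1,1]$, so the numerator satisfies $\bigl|T_q\bigl(\tfrac{2(1-x)}{1-\gamma}-1\bigr)\bigr|\le 1$. For the denominator, the argument is $y_0:=\frac{2}{1-\gamma}-1 = \frac{1+\gamma}{1-\gamma} = 1 + \frac{2\gamma}{1-\gamma} \ge 1+2\gamma > 1$. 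So it remains to produce a lower bound on $T_q(y_0)$ of the form $T_q(y_0)\ge \tfrac12 e^{2q\sqrt{\gamma}}$, which immediately yields $|\mathrm{Step}_{q,\gamma}(x)|\le 2e^{-2q\sqrt{\gamma}}$ on $[\gamma,1]$.

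For the denominator bound I would use the representation $T_q(\cosh\theta)=\cosh(q\theta)$ valid for $y=\cosh\theta\ge 1$. Writing $y_0=\cosh\theta_0$, one has $T_q(y_0)=\cosh(q\theta_0)\ge \tfrac12 e^{q\theta_0}$, so it suffices to check $\theta_0\ge 2\sqrt{\gamma}$. Since $y_0\ge 1+2\gamma$ and $\cosh$ is increasing, $\theta_0\ge \operatorname{arccosh}(1+2\gamma)$; using $\operatorname{arccosh}(1+u)=\ln\bigl(1+u+\sqrt{u^2+2u}\bigr)\ge \sqrt{2u}\cdot(\text{something})$, or more directly the elementary inequality $\operatorname{arccosh}(1+u)\ge \sqrt{2u}$ for $u\in[0,\cdot]$ — actually one cleanly has $\operatorname{arccosh}(1+u)\ge \sqrt{u}$ and with $u=2\gamma$ this gives $\theta_0\ge \sqrt{2\gamma}$, which is not quite $2\sqrt{\gamma}$; so I would instead bound more carefully, e.g.\ via $\cosh(t)\le 1+t^2$ for $|t|\le 1$ implying $\operatorname{arccosh}(1+u)\ge\sqrt{u}$, and track that the $\gamma\le x$ regime combined with $\gamma\in(0,1)$ gives enough room, or simply cite the estimate as stated in \cite{arad2013area} since the precise constant $2$ in the exponent is exactly what is recorded there. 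The only mild subtlety is making the constant in $2\exp(-2q\sqrt\gamma)$ come out with the stated $2$ and $2q$, and this is a purely one-variable calculus check on $\operatorname{arccosh}$; the main obstacle, such as it is, is bookkeeping the elementary inequalities so that the prefactor is $2$ and not something larger, but no conceptual difficulty arises since this is a standard lemma quoted verbatim from the literature.
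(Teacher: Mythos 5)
The paper does not prove this lemma---it is quoted verbatim from \cite{arad2013area}---so there is no in-paper argument to compare against; I can only assess your sketch on its own terms. Your skeleton is the standard and correct one: the affine map sends $[\gamma,1]$ onto $[-1,1]$ so the numerator is at most $1$ in absolute value, and everything reduces to the lower bound $T_q(y_0)\ge\tfrac12 e^{2q\sqrt{\gamma}}$ for $y_0=\frac{1+\gamma}{1-\gamma}$, via $T_q(\cosh\theta_0)=\cosh(q\theta_0)\ge\tfrac12 e^{q\theta_0}$ and the claim $\theta_0\ge 2\sqrt{\gamma}$.

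The one genuine gap is exactly the step you flag and then punt on: your candidate routes to $\theta_0\ge 2\sqrt{\gamma}$ do not work. The inequality $\operatorname{arccosh}(1+u)\ge\sqrt{2u}$ is \emph{false} (since $\cosh t\ge 1+t^2/2$ gives $\operatorname{arccosh}(1+u)\le\sqrt{2u}$, the wrong direction), and replacing $y_0$ by the weaker $1+2\gamma$ discards precisely the margin you need. The fix is to keep $y_0$ exact: from $\cosh\theta_0=\frac{1+\gamma}{1-\gamma}$ one computes $\sinh\theta_0=\sqrt{\cosh^2\theta_0-1}=\frac{2\sqrt{\gamma}}{1-\gamma}$, hence
\[
e^{\theta_0}=\cosh\theta_0+\sinh\theta_0=\frac{1+2\sqrt{\gamma}+\gamma}{1-\gamma}=\frac{(1+\sqrt{\gamma})^2}{(1-\sqrt{\gamma})(1+\sqrt{\gamma})}=\frac{1+\sqrt{\gamma}}{1-\sqrt{\gamma}},
\]
so $\theta_0=\ln\frac{1+\sqrt{\gamma}}{1-\sqrt{\gamma}}=2\operatorname{artanh}(\sqrt{\gamma})\ge 2\sqrt{\gamma}$ by the Taylor expansion of $\operatorname{artanh}$ (all terms positive). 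This closes the argument with the stated prefactor $2$ and exponent $2q\sqrt{\gamma}$; with that computation inserted, your proof is complete and is the standard one.
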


\begin{proof}[Proof of Lemma \ref{lm:AAG}]
    Take any normalized vector $\ket{\psi^\perp}$ orthogonal to the ground state sector of $H_\Lambda$. By the standard Detectability Lemma \cite[Corollary 1]{anshu2016simple}, we have
    \[
    \|T\ket{\psi^\perp}\|^2\leq \frac{1}{1+\lambda_k/g^2}
    \]
    and therefore $\|TP_\Lambda^\perp\|^2\leq \frac{1}{1+\lambda_k/g^2}$.

    By applying Corollary \ref{cor:Pperp} and Lemma \ref{lm:step} with $q= \left\lceil\frac{1}{2(L-1)} \left(\frac{t}{C_\Gamma R} -2L+1 \right)\right\rceil$ and $\gamma=\frac{\lambda_k}{\lambda_k+g^2}\in (0,1)$, we obtain
    \[
    \begin{aligned}
            \|DL(t) \ket{\phi}\|\leq \inf_F F_*(\| T P_\Lambda^\perp\|^2)
    \leq& 2 \exp\left(-2\left\lceil\frac{1}{2(L-1)} \left(\frac{t}{C_\Gamma R} -2L+1 \right)\right\rceil\sqrt{\frac{\lambda_k}{\lambda_k+g^2}} \right)\\
    \leq& 2 \exp\left(-\left(\frac{t}{C_\Gamma (L-1) R}-\frac{2L-1}{L-1}\right)\sqrt{\frac{\lambda_k}{1+g^2}} \right)
    \end{aligned}
    \]
    where we used $\lambda_k\leq 1$ in the last step. This proves Lemma \ref{lm:AAG}.
\end{proof}

\subsection{Proof of Lemma \ref{lm:deltakbound}}
Let $k\geq 1$, $(A,B)\in S_k$, and denote $\Lambda=A\cup B$. We perform the coarse-graining as described in Subsection \ref{ssect:prelim}.

We now define $M_A$ and $M_B$ by reordering the product of $Q_m$ appearing in $DL(t)$ in such a way that $DL(t)=M_AM_B$. Let $\mathcal{I}_{e,A\setminus B}(t)$ be the set
\[
\mathcal{I}_{e,B\setminus A}(t) = \{ m\in \mathcal I_e(t) \mid \mathrm{supp}\, Q_m\cap (B\setminus A) \neq \emptyset \},
\]
i.e. $\mathcal{I}_{e,B\setminus A}(t)$ is the set of indices $m$ such that the support of $Q_m$ intersects $B\setminus A$. We then consider every operator $Q_{m'}$, with $m'\in \mathcal{I}_o(t)$, whose support intersects the support of any operator $Q_m$ for $m\in \mathcal{I}_{e,B\setminus A}(t)$:
\[
\mathcal{I}_{o,B\setminus A}(t) = \{ m' \in \mathcal I_o(t) \mid \exists m \in \mathcal{I}_{e,B\setminus A}(t) \text{ s.t. } Q_m\cap \mathrm{supp}\, Q_{m'}\neq \emptyset \}.
\]
By construction, if $m \in \mathcal{I}_{e,B\setminus A}(t)$ and $m' \in \mathcal I_o(t) \setminus \mathcal{I}_{o,B\setminus A}(t)$, the $Q_m$ and $Q_{m'}$ have disjoint support, and therefore they commute. Thus we can write
\[
DL(t) = \!\!\! \prod_{m \in \mathcal{I}_e(t) \setminus \mathcal{I}_{e,B\setminus A}(t)} Q_m \prod_{m' \in \mathcal I_o(t) \setminus \mathcal{I}_{o,B\setminus A}(t)} Q_m'
\prod_{m\in \mathcal{I}_{e,B\setminus A}(t)} Q_m  \prod_{m' \in  \mathcal{I}_{o,B\setminus A}(t)} Q_m' 
\]
We set
\begin{align*}
M_A &= \prod_{m \in \mathcal{I}_e(t) \setminus \mathcal{I}_{e,B\setminus A}(t)} Q_m  \prod_{m' \in \mathcal I_o(t) \setminus \mathcal{I}_{o,B\setminus A}(t)} Q_m' \\
M_B &= \prod_{m\in \mathcal{I}_{e,B\setminus A}(t)} Q_m  \prod_{m' \in \mathcal{I}_{o,B\setminus A}(t)} Q_m',
\end{align*}
so that $DL(t)=M_A M_B$. If the Euclidean distance between $A\setminus B$ and  $B\setminus A$ exceeds $8t$, then $M_A$ is supported on $A$ and $M_B$ is supported on $B$.

Recall that $P_A$ denotes the projection onto the ground state of $\tilde H_A$. We
have that $P_A = P_A M_A$, and therefore $P_A M_B = P_A M_A M_B = P_A DL(t)$. This implies that
\[
    \norm{ (P_A - M_A) M_B } = \norm{DL(t)P_A^\perp } \le \norm{DL(t) P_{AB}^\perp },
\]
where the inequality follows from the fact that $P_A^\perp \le P_{AB}^\perp$, a consequence of frustration-freeness.
Since it also holds that $P_B = P_B M_B$ and $P_B M_A = P_B DL(t)$, the analogous bound with $A$ and $B$ interchanged. This gives that
\[
\begin{aligned}
    \norm{P_A P_B - P_{AB}} \le& \norm{P_A P_B - M_A P_B}
    + \norm{M_A P_B - M_A M_B} + \norm{M_A M_B - P_{AB}} \\ 
    \le& \norm{(P_A - M_A)P_B} + \norm{M_A (P_B - M_B)} + \norm{DL(t)P_{AB}^\perp} 
    \le 3 \norm{DL(t)P_{AB}^\perp}.
\end{aligned}
\]
By Lemma \ref{lm:AAG}, we have 
\[
\norm{DL(t)P_{AB}^\perp}\leq 2 \exp\left(-\left(\frac{t}{C_\Gamma (L-1) R}-\frac{2L-1}{L-1}\right)\sqrt{\frac{\lambda_k}{1+g^2}} \right).
\]
By \eqref{eq:AiBidist} and Assumption \ref{assum:bg}, we can choose $t$ as
large as possible, as long as $8t < C_\Gamma^{-1} \frac{l_k}{8s_k}$.
We now claim that, from the last bound, it follows that for any $C_2 >0$ there exists a $C_1' > 0$ and a $k_0$ such that
\begin{equation}\label{eq:proof-deltakbound}
\norm{DL(t)P_{AB}^\perp}\leq C_1' \exp\left(- C_2 \sqrt{\lambda_k} \frac{l_k}{s_k} \right) ,\quad \forall k \ge k_0.
\end{equation}
From this fact we conclude the proof of Lemma \ref{lm:deltakbound}, by choosing $C_1$ to account for all the values of $k \le k_0$. 

 If $\frac{l_k}{s_k}$ is bounded and does not diverge, then the r.h.s. of \eqref{eq:proof-deltakbound} is not decreasing, and therefore it is trivially true by choosing $k_0=1$ and $C_1'$ appropriately. If $\frac{l_k}{s_k}$ is divergent, then for any $C_2 >0$ we can always choose $t = t(k)$ to be a the integer part of a multiple of $\frac{l_k}{s_k}$ such that there is a $k_0$ satisfying
\[
\frac{1}{\sqrt{1+g^2}} \left( \frac{t(k)}{C_\Gamma (L-1) R}-\frac{2L-1}{L-1}  \right) \ge C_2 , \quad \forall k \ge k_0.
\]
This implies that
\[
\norm{DL(t)P_{AB}^\perp}\leq 2 \exp\left(- C_2 \sqrt{\lambda_k} \frac{l_k}{s_k} \right), \quad \forall k \ge k_0.
\] 
\qed

\section*{Acknowledgments}
The authors thank the organizers of the online Lattice Seminar, during which the idea for this work arose and they thank Haruki Watanabe for useful discussions.
The research of M.L.\ is  supported by the DFG
through the grant TRR 352 – Project-ID 470903074 and by the European Union (ERC, MathQuantProp, project 101163620).\footnote{Views and opinions expressed are however those of the authors only and do not necessarily reflect those of the European Union or the European Research Council Executive Agency. Neither the European Union nor the granting authority can be held responsible for them.}
A.\,L.~acknowledges financial support from grants PID2020-113523GB-I00, PID2023-146758NB-I00, CEX2023-001347-S, funded by MICIU/AEI/10.13039/501100011033, grant RYC2019-026475-I funded by MICIU/AEI/10.13039/501100011033 and ``ESF Investing in your future'', as well as from ``Programma per Giovani Ricercatori Rita Levi Montalcini'' funded by by the Italian Ministry of University and Research (MUR).

\AtNextBibliography{\footnotesize}
    \printbibliography

\end{document}